\documentclass[a4paper,11pt]{article}

\usepackage{graphicx}
\usepackage{amsmath,amsfonts}
\usepackage{amssymb}
\usepackage{color}
\usepackage{algorithm}
\usepackage{algpseudocode}
\usepackage{booktabs}
\usepackage{url}
\usepackage[subrefformat=parens]{subcaption}
\newenvironment{proof}[1][Proof]{\textbf{#1.} }{\hfill$\square$}
\newtheorem{theorem}{Theorem}
\newtheorem{lemma}{Lemma}
\newtheorem{remark}{Remark}
\newtheorem{ass}{Assumption}
\newtheorem{corollary}{Corollary}

\newtheorem{definition}{Definition}

\begin{document}

\title{Stable Inversion of Discrete-Time Linear Periodically Time-Varying Systems via Cyclic Reformulation\thanks{Corresponding author: H.~Okajima. Email: okajima@cs.kumamoto-u.ac.jp}}

\author{Hiroshi Okajima\\
{\small Faculty of Advanced Science and Technology, Kumamoto University, Japan}}

\date{}

\maketitle

\noindent\textbf{Keywords:} Inverse systems; Periodically time-varying systems; Cyclic reformulation; Stable inversion; Multirate systems.

\begin{abstract}
Inverse systems for discrete-time linear periodically time-varying (LPTV) plants are fundamental to feedforward control and iterative learning control of multirate and periodic systems. Building on the classical cyclic reformulation, which converts an $N$-periodic system into an equivalent LTI system at the original sampling rate, this paper derives an explicit closed-form $N$-periodic state-space realization of the inverse for an arbitrary uniform periodic relative degree $r \geq 0$ (defined through the periodic Markov parameters). The key technical result is a structure-preservation property: after absorbing a phase shift for $r \geq 1$, the LTI inverse of the cycled plant provably retains the cyclic (block-circulant/block-diagonal) structure, so that the periodic inverse matrices can be read off block-by-block. The resulting inverse system is real-valued, causal for $r = 0$ and $r$-step-delayed for $r \geq 1$, operates at the original sampling rate, and reconstructs the input exactly under matched initial conditions, with geometric error decay otherwise. Its stability is characterized by the invariant zeros of the cycled plant, generalizing the minimum phase condition of the LTI case. Numerical examples illustrate the construction, the stability characterization, and the implementation as an online periodic filter.
\end{abstract}

\section{Introduction}\label{sec:intro}

The inverse of a dynamical system plays a fundamental role in feedforward control and output tracking \cite{silverman1969tac,devasia1996tac}, iterative learning control (ILC) \cite{bristow2006csm}, and invertibility analysis \cite{hirschorn1979tac}.
For linear time-invariant (LTI) systems, stable inversion theory is classical: if all zeros lie strictly inside the unit disk, a causal stable inverse exists and can be constructed explicitly in state-space form \cite{moylan1977tac,chen1984tac}, and a bounded noncausal inverse is available otherwise via exponential dichotomy \cite{devasia1998tac,vanzundert2019ijc}, at the cost of requiring future output information (preview).

Many practical systems, however, exhibit periodically time-varying dynamics that cannot be captured by an LTI model. Linear periodically time-varying (LPTV) systems arise naturally in multirate sampled-data control \cite{chen1995sampled}, position-dependent and non-equidistantly sampled systems \cite{vanzundert2019ijc,vanzundert2017acc}, and high-precision positioning stages \cite{vanzundert2020mech}. Because the system matrices vary periodically, the transfer function framework underlying LTI inversion theory is no longer directly available.

Invertibility conditions for general linear time-varying discrete-time systems were established by Kono \cite{kono1981ijc}. For periodic systems specifically, the inversion problem was first addressed by Perdon, Conte, and Longhi \cite{perdon1992}, who adapted Silverman's structure algorithm \cite{silverman1969tac} to the periodic setting (with $D_k = 0$) and obtained necessary and sufficient conditions for left/right invertibility together with a synthesis procedure for the inverse; the resulting inverse is expressed through the output matrices of a Periodic Structure Algorithm and applies a time-varying bank of forward shift operators to future outputs, rather than being given in closed form as functions of the plant matrices.
On the computational side, Varga \cite{varga2004} developed the computation of generalized inverses of periodic systems via lifted pencil methods, noting explicit formulas only for the square case with invertible feedthrough as a starting point.
The delay-inverse construction for systems with $D_k = 0$ was listed therein as an open computational problem; see \cite{varga2013} for a survey of computational paradigms for linear periodic systems.
More recently, stable inversion of LPTV systems has been studied for feedforward control and ILC; a comparison of inversion-based approaches is given in \cite{vanzundert2018mech}. The Floquet-based stable inversion of van Zundert and Oomen \cite{vanzundert2019ijc,vanzundert2017acc} adapts the exponential dichotomy framework of \cite{devasia1998tac} to discrete-time periodic systems and handles nonminimum phase plants; the resulting inverse operator is exact in the infinite-horizon sense, but requires complex-valued Floquet factors and noncausal processing, and does not yield a causal state-space realization. The lifting-based approaches \cite{zhu2020ajc,zhu2025tac} construct exact inverses of the lifted LTI system, including the nonminimum phase multivariable case \cite{zhu2025tac}; since they are based on lifting, the resulting inverse operates on the $N$-step block time scale and yields a finite-horizon input sequence computed offline. Signal-processing-oriented inversions of LPTV filters \cite{chauvet2003icassp} typically assume a nonzero feedthrough term and are not formulated in a state-space control framework.

Viewed together, these approaches share a common gap: none provides a real-valued, closed-form $N$-periodic state-space inverse for arbitrary uniform periodic relative degree $r \geq 0$ operating at the original sampling rate. Explicit formulas are available only for the square case with invertible feedthrough ($r = 0$) \cite{varga2004}, while the delay-inverse construction for $D_k = 0$ ($r \geq 1$) remains an open computational problem therein. The structure-algorithm inverse of \cite{perdon1992} is not in closed form, Floquet-based inverses \cite{vanzundert2019ijc,vanzundert2017acc} are complex-valued and noncausal, and lifting-based inverses \cite{zhu2020ajc,zhu2025tac} yield offline solutions on the $N$-step block time scale.

This paper addresses this gap using the \emph{cyclic reformulation} (also called cycling or twisting \cite{meyer1975,kuijper1999,bittanti2009}), which transforms an $N$-periodic system into an equivalent LTI system at the original sampling rate while preserving the step-by-step input--output correspondence (Remark~\ref{rem:cyclic_vs_lifting}).
Attention is restricted to discrete time, where the cyclic isomorphism remains finite-dimensional (state dimension $Nn$).
The contribution consists of two parts:
\begin{enumerate}
\item A structure-preservation property underlying the \emph{parameter extraction} step: for arbitrary uniform periodic relative degree $r \geq 0$, the LTI inverse of the cycled plant---after absorbing a phase shift for $r \geq 1$---provably retains the cyclic ($1$-shift block-circulant / block-diagonal) structure, so that a real-valued, causal (or $r$-step-delayed) $N$-periodic state-space inverse operating at the original sampling rate can be read off block-by-block in closed form (Theorem~\ref{thm:main}).
\item A stability characterization: the stability of this closed-form inverse is determined by the invariant zeros of the cycled plant (Theorem~\ref{thm:stability}), giving a directly verifiable periodic minimum phase condition that generalizes the classical LTI criterion.
\end{enumerate}
The resulting inverse is implementable online as a causal (or $r$-step-preview) periodic filter at the original sampling rate, and is thus directly usable for feedforward control and ILC, in which the $r$-sample output preview required for $r \geq 1$ is available because the desired output trajectory is specified in advance.

This paper is organized as follows. Section~\ref{sec:methods} recalls the cyclic reformulation and the classical LTI inverse system results. Section~\ref{sec:main} presents the unified problem formulation, the closed-form inverse (Theorem~\ref{thm:main}, with the relative degree zero case as Corollary~\ref{cor:rd0}), and the stability characterization. Section~\ref{sec:examples} gives two numerical examples, Section~\ref{sec:discussion} discusses the relationship to existing methods, and Section~\ref{sec:conclusion} concludes the paper.

\textit{Notation:}
Let $\mathbb{R}$, $\mathbb{Z}$, $\mathbb{Z}_+$, and $\mathbb{N}$ denote the real numbers, integers, nonnegative integers, and positive integers, respectively.
For a square matrix $M$, $M^{-1}$ denotes the matrix inverse, and $\rho(M)$ denotes the spectral radius.
A matrix is called \emph{Schur stable} if all its eigenvalues satisfy $|\lambda| < 1$.
For a state-space system $(A, B, C, D)$, the \emph{invariant zeros} are the values $z_0 \in \mathbb{C}$ at which the rank of the system pencil
$\bigl[\begin{smallmatrix} z_0 I - A & -B \\ C & D \end{smallmatrix}\bigr]$
drops below its normal rank.
Throughout, all subscripts of $N$-periodic matrices (such as $A_k$, $B_k$, $C_k$, $D_k$) are interpreted modulo $N$; that is, $A_k = A_{k \bmod N}$.
For brevity, the state transition product is defined as $\Phi(j,i) := A_{j-1}A_{j-2}\cdots A_i$ for $j > i$ and $\Phi(i,i) := I$.

\section{Preliminaries}\label{sec:methods}

\subsection{Cyclic Reformulation of Periodically Time-Varying Systems}\label{sec:cyclic}

Consider a discrete-time LPTV system with period $N$ ($N \in \mathbb{N}$):
\begin{eqnarray}
x(k+1) &=& A_k x(k) + B_k u(k), \label{eq:lptv_state}\\
y(k)   &=& C_k x(k) + D_k u(k), \label{eq:lptv_output}
\end{eqnarray}
where $x \in \mathbb{R}^n$, $u \in \mathbb{R}^m$, $y \in \mathbb{R}^p$, and
$A_{k+N} = A_k$, $B_{k+N} = B_k$, $C_{k+N} = C_k$, $D_{k+N} = D_k$ for all $k \in \mathbb{Z}_+$.
This system is denoted by $(A_k, B_k, C_k, D_k)$.

The cyclic reformulation \cite{meyer1975,kuijper1999,bittanti2009,cyc1} transforms the LPTV system into an equivalent LTI system; see also \cite{grasselli1988,grasselli1991,conte1989} for the associated structural theory and \cite{multi2,multi_kf} for recent applications.
Letting $i := k \bmod N \in \{0, 1, \ldots, N-1\}$ denote the time phase, the cycled input signal $\check{u}(k) \in \mathbb{R}^{Nm}$ is defined by
\begin{equation}\label{eq:cycled_input}
\check{u}(k) = e_{i} \otimes u(k), \quad i := k \bmod N,
\end{equation}
where $e_i \in \mathbb{R}^N$ is the $(i+1)$-th standard unit vector.
The cycled state $\check{x}(k) \in \mathbb{R}^{Nn}$ and output $\check{y}(k) \in \mathbb{R}^{Np}$ are defined analogously.

The cycled system is the following LTI system:
\begin{eqnarray}
\check{x}(k+1) &=& \check{A}\check{x}(k) + \check{B}\check{u}(k), \label{eq:cycled_state}\\
\check{y}(k)   &=& \check{C}\check{x}(k) + \check{D}\check{u}(k), \label{eq:cycled_output}
\end{eqnarray}
where
\begin{equation}\label{eq:checkAB}
\check{A} = \begin{bmatrix}
0      & \cdots & \cdots & 0      & A_{N-1} \\
A_0    & 0      &        &        & 0       \\
0      & A_1    & \ddots &        & \vdots  \\
\vdots &        & \ddots & 0      & 0       \\
0      & \cdots & 0      & A_{N-2}& 0
\end{bmatrix}\!,
\end{equation}
\begin{equation}\label{eq:checkB}
\check{B} = \begin{bmatrix}
0      & \cdots & \cdots & 0      & B_{N-1} \\
B_0    & 0      &        &        & 0       \\
0      & B_1    & \ddots &        & \vdots  \\
\vdots &        & \ddots & 0      & 0       \\
0      & \cdots & 0      & B_{N-2}& 0
\end{bmatrix}\!,
\end{equation}
where $\check{A} \in \mathbb{R}^{Nn \times Nn}$, $\check{B} \in \mathbb{R}^{Nn \times Nm}$, and
\begin{equation}\label{eq:checkCD}
\check{C} = \mathrm{diag}(C_0, C_1, \ldots, C_{N-1}),
\end{equation}
\begin{equation}
\check{D} = \mathrm{diag}(D_0, D_1, \ldots, D_{N-1}),\label{eq:checkCD2}
\end{equation}
where $\check{C} \in \mathbb{R}^{Np \times Nn}$ and $\check{D} \in \mathbb{R}^{Np \times Nm}$.

The following lemma states the fundamental equivalence between the original LPTV system and the cycled LTI system, which is used throughout this paper.

\begin{lemma}[{\cite[Chap.~6]{bittanti2009}}]\label{lem:equiv}
Let $(x(k), u(k), y(k))$ and $(\check{x}(k), \check{u}(k), \check{y}(k))$ be the trajectories of the LPTV system~(\ref{eq:lptv_state})--(\ref{eq:lptv_output}) and the cycled system~(\ref{eq:cycled_state})--(\ref{eq:cycled_output}), respectively, with the same initial condition and corresponding cycled inputs. Then:
\begin{enumerate}
\item[(i)] At each time $k$ with $i = k \bmod N$, the $i$-th block of $\check{y}(k)$ is the only nonzero block and equals $y(k)$.
\item[(ii)] Each eigenvalue $\lambda$ of the monodromy matrix $\Phi := A_{N-1}\cdots A_0$ generates $N$ eigenvalues of $\check{A}$, given by
$\lambda^{1/N}e^{j2\pi l/N}$ for $l = 0,\ldots,N-1$;
in particular, $\rho(\check{A}) = \rho(\Phi)^{1/N}$.
\item[(iii)] More generally, let $\bar{A}_k$ ($k = 0,\ldots,N-1$) be any sequence of $n \times n$ matrices, and let $\check{\bar{A}}$ denote the $Nn \times Nn$ matrix obtained by replacing $A_k$ with $\bar{A}_k$ in the $1$-shift block-circulant structure~(\ref{eq:checkAB}). Then $\rho(\check{\bar{A}}) = \rho(\bar{A}_{N-1}\cdots\bar{A}_0)^{1/N}$.
\end{enumerate}
\end{lemma}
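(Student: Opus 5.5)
Part (i) is proved by induction on $k$, using the invariant that the cycled state is supported on a single block. Specifically, if $x(0)$ corresponds to $\check{x}(0) = e_0\otimes x(0)$, then $\check{x}(k) = e_{i}\otimes x(k)$ with $i = k \bmod N$. For the inductive step, suppose $\check{x}(k) = e_i\otimes x(k)$ and $\check{u}(k) = e_i\otimes u(k)$. The block-$(i+1,i)$ entries of $\check{A}$ and $\check{B}$ are $A_i$ and $B_i$, with the wrap-around block $(0,N-1)$ for $i=N-1$. Hence $\check{A}\check{x}(k)+\check{B}\check{u}(k)$ has its only nonzero block in position $(i+1)\bmod N$, and that block equals $A_ix(k)+B_iu(k) = x(k+1)$. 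Since $\check{C}$ and $\check{D}$ are block diagonal, $\check{y}(k)$ has the single nonzero block $C_ix(k)+D_iu(k)=y(k)$ in position $i$. The initial-condition convention, namely that ``same initial condition'' means $\check{x}(0)=e_0\otimes x(0)$, is fixed at the start.

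Parts (ii) and (iii) are proved together, since (ii) is (iii) with $\bar A_k = A_k$ plus an eigenvalue-level refinement. Write $\check{\bar A} = (P\otimes I_n)\,\mathrm{diag}(\bar A_0,\dots,\bar A_{N-1})$, where $P$ is the cyclic down-shift permutation. The key computation is $\check{\bar A}^{N}$. Each application moves a block from phase $i$ to phase $i+1$ and multiplies by $\bar A_i$, so $\check{\bar A}^N$ is block diagonal. Its $i$-th diagonal block is the cyclically shifted product $\Psi_i := \bar A_{i+N-1}\cdots\bar A_{i+1}\bar A_i$, with indices taken mod $N$. Each $\Psi_i$ has the same nonzero spectrum, with multiplicities, as $\Psi_0=\bar A_{N-1}\cdots\bar A_0$, because $XY$ and $YX$ share nonzero eigenvalues. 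It follows that $\rho(\check{\bar A})^N = \rho(\check{\bar A}^N) = \max_i \rho(\Psi_i) = \rho(\Psi_0)$, which gives (iii).

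For the finer statement (ii), $\check{\bar A}$ is shown to be similar to $\omega\check{\bar A}$ for $\omega = e^{j2\pi/N}$. The similarity is $T = \mathrm{diag}(I,\omega I,\dots,\omega^{N-1}I)$. Conjugation scales the $(i+1,i)$ block by $\omega$, and the corner block by $\omega^{-(N-1)}=\omega$. Hence the spectrum of $\check A$ is invariant under rotation by $2\pi/N$. Combining this with the fact that the eigenvalues of $\check A^N$ are exactly those of $\Phi$, each repeated $N$ times across the diagonal blocks $\Psi_i$, shows that each nonzero eigenvalue $\lambda$ of $\Phi$ contributes exactly the $N$ roots $\lambda^{1/N}e^{j2\pi l/N}$. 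The zero eigenvalues are handled by a dimension count.

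The main obstacle is the multiplicity bookkeeping in (ii), that is, showing that the rotation orbit accounts for precisely $N$ eigenvalues per eigenvalue of $\Phi$ rather than merely covering the right set. To handle this I would use the characteristic polynomial. Block elimination on $\det(zI-\check A)$ along the cyclic chain gives $\det(zI-\check A)=\det(z^N I - \Phi)$. This identity yields the multiplicities directly, including the zero eigenvalue, and I would use it as the primary argument, with the similarity argument serving as intuition.
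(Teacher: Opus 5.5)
The paper gives no proof of this lemma; it quotes it from \cite[Chap.~6]{bittanti2009}, so there is no in-paper argument to compare yours with. Your proof is correct and self-contained, and it is the standard argument:
\begin{itemize}
\item The induction $\check{x}(k)=e_{k \bmod N}\otimes x(k)$, from $\check{x}(0)=e_0\otimes x(0)$ and the corner blocks $A_{N-1}$, $B_{N-1}$, gives (i).
\item The observation $\check{\bar A}^{N}=\mathrm{diag}(\Psi_0,\dots,\Psi_{N-1})$, whose blocks are cyclic rotations of one product, gives (iii).
\item The identity $\det(zI-\check{A})=\det(z^{N}I-\Phi)$ gives (ii) with the correct multiplicities. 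This includes the eigenvalue $0$, which is $Nm_0$-fold in $\check{A}$ if it is $m_0$-fold in $\Phi$.
\end{itemize}
Your decision to make the characteristic polynomial the primary argument for (ii) is the right one. That said, the rotation invariance $T\check{A}T^{-1}=\omega\check{A}$, combined with the spectrum of $\check{A}^N$, also handles multiplicities for nonzero $\lambda$, since it forces all $N$ distinct roots to have equal multiplicity.

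Two points to tighten when you write out the block elimination. First, it divides by $z$, so obtain the identity for $z\neq 0$ and extend it as a polynomial identity. A clean way is to split $zI-\check{A}=L-E_0A_{N-1}E_{N-1}^{\top}$, with $L$ the lower block-bidiagonal part, and apply the matrix determinant lemma. This gives
\[
\det(zI-\check{A})=z^{Nn}\det\bigl(I-z^{-N}A_{N-2}\cdots A_0A_{N-1}\bigr)=\det\bigl(z^N I-\Psi_{N-1}\bigr).
\]
Second, the result is a cyclic rotation $\Psi_{N-1}$ of $\Phi$, not $\Phi$ itself. To match the multiplicity of the zero eigenvalue in (ii), you need the full equality of characteristic polynomials $\chi_{XY}=\chi_{YX}$ for square $X,Y$ of equal size. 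Equality of nonzero spectra alone, which is what you invoke in (iii), is not enough; the stronger fact holds here because all blocks are $n\times n$.

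Finally, the same identity with $\bar A_k$ in place of $A_k$ yields (iii) at once, so your separate $\check{\bar A}^N$ argument is correct but redundant.
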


\begin{remark}\label{rem:cyclic_vs_lifting}
Cyclic reformulation differs from the standard lifting technique \cite{chen1995sampled} as follows. In the lifted reformulation, the state coincides with the state of the original system sampled once per period, $x_L(h) = x(hN)$ (dimension $n$), while the input and output are stacked over one period (dimensions $Nm$ and $Np$), and the resulting LTI system evolves on the $N$-step block time scale. Cyclic reformulation, in contrast, expands the state to dimension $Nn$ but preserves the original time scale and the pointwise, step-by-step input--output correspondence (Lemma~\ref{lem:equiv}(i)). The distinction relevant to this paper is therefore the time scale and the preservation of the per-step causality structure, rather than the state dimension; recovering the per-step causal structure from a lifted representation is more delicate, as in the realization theory of periodic systems \cite{bittanti2009,colaneri1995}. This pointwise correspondence is what enables the parameter extraction step of the proposed procedure.
\end{remark}

\subsection{Inverse Systems for LTI Plants}\label{sec:lti_inv}

This subsection recalls the classical LTI inverse system results; both lemmas are standard and are stated without proof.
Consider the LTI system
\begin{eqnarray}\label{eq:lti}
G:\quad x(k+1) &=& A x(k) + B u(k), \nonumber\\
y(k) &=& Cx(k) + Du(k).
\end{eqnarray}
The system $G$ is \emph{invertible} if the transfer matrix $G(z) = C(zI-A)^{-1}B + D$ is square and has full rank for almost all $z$.
An \emph{inverse} $G^{-1}$ is a system such that $G^{-1}G = I$; for the square systems considered in this paper ($m = p$), this also implies $GG^{-1} = I$.
For square systems, the invariant zeros of $G$ are the values $z_0 \in \mathbb{C}$ such that
\begin{equation}\label{eq:zero_poly}
\det \begin{bmatrix} z_0 I - A & -B \\ C & D \end{bmatrix} = 0.
\end{equation}

\begin{lemma}[LTI inverse, relative degree zero {\cite{moylan1977tac,chen1984tac}}]\label{lem:lti_inv_reldeg0}
Suppose $D$ is square and nonsingular ($m = p$). Then $G$ is invertible and a causal inverse is given by
\begin{eqnarray}\label{eq:lti_inv}
G^{-1}:\quad \zeta(k+1) &=& (A - BD^{-1}C)\,\zeta(k) + BD^{-1}y(k), \nonumber\\
u(k) &=& -D^{-1}C\,\zeta(k) + D^{-1}y(k).
\end{eqnarray}
The eigenvalues of $A - BD^{-1}C$ are the invariant zeros of $G$ and are the poles of $G^{-1}$.
Consequently, $G^{-1}$ is stable if and only if all invariant zeros of $G$ satisfy $|z_0| < 1$.
\end{lemma}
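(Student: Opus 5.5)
The plan is to verify directly that the system (\ref{eq:lti_inv}) inverts $G$, then to identify its poles with the invariant zeros of $G$ through a factorization of the system pencil. Stability then follows at once.

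\textbf{Inversion.} Since $D$ is nonsingular, the output equation $y(k)=Cx(k)+Du(k)$ can be solved for the input:
\[
u(k) = -D^{-1}Cx(k) + D^{-1}y(k).
\]
Substituting this into the state equation gives
\[
x(k+1) = (A-BD^{-1}C)\,x(k) + BD^{-1}y(k).
\]
This is exactly (\ref{eq:lti_inv}) with $\zeta = x$. Hence, if $\zeta(0)=x(0)$, induction on $k$ gives $\zeta(k)=x(k)$ and $u(k)$ is reproduced exactly. In transfer-function terms, compose the two realizations by feeding the output of $G$ into $G^{-1}$. After the state change $(x,\zeta)\mapsto(x,\zeta-x)$, the difference state $e=\zeta-x$ obeys $e(k+1)=(A-BD^{-1}C)\,e(k)$ and does not affect the map from $u$ to the reconstructed input. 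This map has zero dynamics and feedthrough $D^{-1}D=I$, so $G^{-1}(z)G(z)=I$. Because the system is square, $G(z)$ is then full rank for almost all $z$, i.e.\ $G$ is invertible, and also $GG^{-1}=I$.

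\textbf{Poles of $G^{-1}$ as invariant zeros.} Left-multiply the pencil by a unimodular block matrix to eliminate $C$ using $D$:
\[
\begin{bmatrix} I & BD^{-1}\\ 0 & I\end{bmatrix}
\begin{bmatrix} zI-A & -B\\ C & D\end{bmatrix}
=\begin{bmatrix} zI-(A-BD^{-1}C) & 0\\ C & D\end{bmatrix}.
\]
Taking determinants gives
\[
\det\begin{bmatrix} zI-A & -B\\ C & D\end{bmatrix} = \det D\cdot\det\bigl(zI-(A-BD^{-1}C)\bigr).
\]
Since $\det D\neq 0$, the right-hand side is a nonzero polynomial, so the normal rank of the pencil is full. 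The rank therefore drops exactly at the roots of $\det(zI-(A-BD^{-1}C))$, i.e.\ at the eigenvalues of $A-BD^{-1}C$, counted with algebraic multiplicity. These eigenvalues are the poles of the realization (\ref{eq:lti_inv}).

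\textbf{Stability.} The internal stability of (\ref{eq:lti_inv}) is equivalent to Schur stability of $A-BD^{-1}C$. By the previous step, this holds if and only if every invariant zero of $G$ satisfies $|z_0|<1$. The same condition gives geometric decay of $e(k)$ when the initial conditions are mismatched.

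\textbf{Main point of care.} Nothing here is deep. The one subtlety is the word ``poles'': the eigenvalues of $A-BD^{-1}C$ are the poles of the \emph{realization}, and they are transfer-function poles only if that realization is minimal. Minimality of (\ref{eq:lti_inv}) is equivalent to minimality of $(A,B,C,D)$, because output injection preserves observability and state feedback preserves controllability. I would therefore state the stability claim for the state-space realization, which is what the paper implements.
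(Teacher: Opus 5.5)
The paper gives no proof of this lemma. It recalls the result as classical (citing Moylan and Chen), so there is no in-paper argument to compare against. Your proof is correct and is the standard one:
\begin{itemize}
\item Solving the output equation for $u$ yields the realization (\ref{eq:lti_inv}) with $\zeta = x$.
\item The composed system gives the error dynamics $e(k+1) = (A - BD^{-1}C)e(k)$. This is the LTI counterpart of the verification step at the end of the proof of Theorem~\ref{thm:main}, with $e = -\epsilon$.
\item The unimodular left factor gives $\det\bigl[\begin{smallmatrix} zI-A & -B\\ C & D\end{smallmatrix}\bigr] = \det D \cdot \det\bigl(zI - (A - BD^{-1}C)\bigr)$. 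This settles both full normal rank and the identification of the invariant zeros with the eigenvalues of $A - BD^{-1}C$.
\end{itemize}

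Your caveat about ``poles'' is well placed and matches how the paper uses the lemma. Stability of the inverse is defined through the state matrix (Definition~\ref{def:stable_inv}). Theorem~\ref{thm:stability} invokes the lemma for the cycled plant $\check{P}$, whose realization is not assumed minimal, so the realization-level (internal stability) reading is the one actually needed. Your remark that minimality of the two realizations is equivalent is also right: $A - BD^{-1}C$ is at once a state feedback and an output injection, and the input and output matrices change only by invertible factors.

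One wording fix: when you say the $u \to \hat u$ map ``has zero dynamics,'' you mean it has no dynamics, because the $e$-subsystem is unreachable from $u$. Avoid that phrase. In standard terminology the zero dynamics of $G$ are precisely $e(k+1) = (A - BD^{-1}C)e(k)$, the very subsystem you are setting aside.
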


\begin{lemma}[LTI inverse, relative degree $r$ {\cite{silverman1969tac,kono1981ijc}}]\label{lem:lti_inv_relr}
Suppose $D = 0$, $m = p$, and $CA^jB = 0$ for $j = 0, \ldots, r-2$, while $CA^{r-1}B$ is nonsingular. Then the system $G$ has relative degree $r$ and an $r$-step-delayed inverse is given by
\begin{eqnarray}
G^{-1}_r:\quad \zeta(k+1) &=& \bigl(A - B(CA^{r-1}B)^{-1}CA^r\bigr)\,\zeta(k) \nonumber\\
&& {}+ B(CA^{r-1}B)^{-1}y(k+r), \label{eq:lti_inv_r_state}\\
u(k) &=& -(CA^{r-1}B)^{-1}CA^r\,\zeta(k) \nonumber\\
&& {}+ (CA^{r-1}B)^{-1}y(k+r). \label{eq:lti_inv_r_output}
\end{eqnarray}
The spectrum of $A - B(CA^{r-1}B)^{-1}CA^r$ consists of the invariant zeros of $G$ together with additional eigenvalues at the origin; consequently, $G^{-1}_r$ is stable if and only if all invariant zeros of $G$ satisfy $|z_0| < 1$.
\end{lemma}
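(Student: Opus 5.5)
The plan is to verify the reconstruction identity directly, then analyze the spectrum of $A_r := A - B(CA^{r-1}B)^{-1}CA^r$ by splitting the state space along the observability-type matrix $O_r := [C^\top,\,(CA)^\top,\ldots,(CA^{r-1})^\top]^\top \in \mathbb{R}^{rp\times n}$. Write $M := CA^{r-1}B$, which is nonsingular by hypothesis.

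\textbf{Reconstruction.} Iterate the state equation of $G$ with $D=0$ and use $CA^jB=0$ for $j\le r-2$. This gives $y(k+j)=CA^jx(k)$ for $j<r$ and
\[
y(k+r)=CA^r x(k)+M u(k).
\]
Solving for $u(k)$ yields $u(k)=-M^{-1}CA^rx(k)+M^{-1}y(k+r)$. Substituting this into $x(k+1)=Ax(k)+Bu(k)$ gives $x(k+1)=A_r x(k)+BM^{-1}y(k+r)$. Hence, if $\zeta(0)=x(0)$, then $\zeta(k)=x(k)$ for all $k$ by induction, and the output equation (\ref{eq:lti_inv_r_output}) returns $u(k)$ exactly. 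This shows that $G_r^{-1}$ is an $r$-step-delayed left inverse. Squareness ($m=p$) then gives the two-sided statement in the transfer-function sense.

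\textbf{Invariant structure of $A_r$.} For $j<r-1$ we have $CA^jA_r=CA^{j+1}-CA^jBM^{-1}CA^r=CA^{j+1}$, and for $j=r-1$ we get $CA^{r-1}A_r=CA^r-MM^{-1}CA^r=0$. Hence $O_rA_r=SO_r$, where $S$ is the nilpotent block up-shift on $\mathbb{R}^{rp}$. Next, $O_r$ has full row rank $rp$, so that $rp\le n$. Indeed, $O_r[B,AB,\ldots,A^{r-1}B]$ is block anti-triangular with $M$ on the anti-diagonal, hence nonsingular.

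Complete $O_r$ to a nonsingular $T=[O_r^\top,\,W^\top]^\top$. In these coordinates $A_r$ is block triangular: $TA_rT^{-1}=\bigl[\begin{smallmatrix} S & 0\\ * & Z\end{smallmatrix}\bigr]$, where $Z$ is the restriction of $A_r$ to the invariant subspace $\ker O_r$. Thus the spectrum of $A_r$ is $rp$ eigenvalues at the origin together with $\mathrm{spec}(Z)$.

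\textbf{Identifying $\mathrm{spec}(Z)$ with the invariant zeros.} Let $z_0$ be an invariant zero with null vector $(x_0,u_0)\neq 0$, so that $(z_0I-A)x_0=Bu_0$ and $Cx_0=0$. Inductively, $CA^jx_0=z_0CA^{j-1}x_0-CA^{j-1}Bu_0=0$ for $j<r$, so $x_0\in\ker O_r$. At $j=r$ this gives $u_0=-M^{-1}CA^rx_0$. Then $x_0\neq 0$, since otherwise $u_0$ would also vanish, and $A_rx_0=z_0x_0$.

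Conversely, any eigenvector $x_0\in\ker O_r$ of $Z$ with eigenvalue $z_0$, paired with $u_0:=-M^{-1}CA^rx_0$, is a null vector of the pencil. Hence the eigenvalues of $Z$ coincide with the invariant zeros as sets.

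To match multiplicities I would use a determinant identity rather than eigenvectors. Apply row operations to the pencil using $T$ and the relations above, or equivalently use a Schur-complement reduction that eliminates $u$ through $M$. The target is
\[
\det\begin{bmatrix} zI-A & -B\\ C & 0\end{bmatrix}=c\,\det(zI-Z),\qquad c\neq 0\ \text{constant}.
\]
This identity is the main obstacle. The cleanest route I would try first is to change output coordinates to $O_r x$ and note that the pencil's Smith form splits off $r$ unimodular blocks tied to $M$.

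\textbf{Stability.} Once the spectrum is identified, the zero eigenvalues are harmless, so $A_r$ is Schur stable if and only if all invariant zeros satisfy $|z_0|<1$. This is exactly stability of $G_r^{-1}$.
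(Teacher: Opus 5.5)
The paper gives no proof of this lemma (it is quoted as classical from Silverman and Kono), so your argument has to stand on its own. It does. The reconstruction identity, the intertwining $O_rA_r = SO_r$, the full row rank of $O_r$ via the block anti-triangular product with $[B,AB,\ldots,A^{r-1}B]$, the block-triangular form with $rp$ nilpotent directions, and the two-way eigenvector correspondence between $\mathrm{spec}(Z)$ and the invariant zeros are all correct. Your null-vector description of the invariant zeros is valid because the pencil has full normal rank here, which the identity below confirms. Together these give $\mathrm{spec}(A_r)=\{0\}\cup\{\text{invariant zeros}\}$ as sets. That is all the stability conclusion needs.

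The only unfinished piece is the multiplicity identity you call the main obstacle. It matters if ``consists of'' is read with multiplicities, which is how the paper itself uses the lemma in Section~\ref{sec:ex_rd2} when it counts $Nn-rNm$ finite zeros. It needs no Smith form, because you already have everything required.

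Right-multiplying the pencil by $\bigl[\begin{smallmatrix} I & 0\\ -M^{-1}CA^r & I\end{smallmatrix}\bigr]$ has determinant $1$. It replaces $zI-A$ by $zI-A_r$ and leaves $-B$, $C$, $0$ unchanged. Your relations $CA^jA_r=CA^{j+1}$ ($j<r-1$) and $CA^{r-1}A_r=0$ give, by induction, $CA_r^j=CA^j$ for $j\le r-1$ and $CA_r^r=0$. Hence $C(zI-A_r)^{-1}=\sum_{j=0}^{r-1}CA_r^jz^{-j-1}$ exactly, so $C(zI-A_r)^{-1}B=Mz^{-r}$. The Schur complement, taken for $z\notin\mathrm{spec}(A_r)$ and then extended as a polynomial identity, yields
\[
\det\begin{bmatrix} zI-A & -B\\ C & 0\end{bmatrix}=\det(zI-A_r)\,\det(M)\,z^{-rp}.
\]
Equivalently, $\det(zI-A_r)=(\det M)^{-1}z^{rp}\det\bigl[\begin{smallmatrix} zI-A & -B\\ C & 0\end{smallmatrix}\bigr]$. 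This gives the full statement with multiplicities in one line: $rp$ eigenvalues at the origin plus the roots of the pencil determinant. Combined with your triangular form, it gives $c=\det M$ in your target identity.

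In fact this identity makes the coordinate change $T$ and the eigenvector argument unnecessary for the spectral claim.
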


Inversion of LTI systems is possible under conditions weaker than the nonsingularity of the first nonzero Markov parameter, via the structure algorithm of Silverman \cite{silverman1969tac}; see also \cite{kono1981ijc} for the time-varying case and \cite{perdon1992} for its periodic adaptation.

\section{Main Results}\label{sec:main}

\subsection{Unified Problem Formulation}\label{sec:problem}

Consider the LPTV system~(\ref{eq:lptv_state})--(\ref{eq:lptv_output}) with $m = p$ (square system).
The two cases of nonsingular feedthrough and zero feedthrough are treated in a unified manner from the outset, through the following definition.

\begin{definition}[Periodic relative degree]\label{def:per_markov}
The \emph{periodic Markov parameter of order $r$} at time phase $k$ is defined as
\begin{equation}\label{eq:per_markov}
M_k^{(0)} := D_k, \qquad
M_k^{(r)} := C_{k+r}\,\Phi(k+r, k+1)\, B_k, \quad r \geq 1,
\end{equation}
where all indices are interpreted modulo $N$ and $\Phi(k+1,k+1) = I$, so that $M_k^{(1)} = C_{k+1}B_k$.
The LPTV system has \emph{uniform periodic relative degree $r = 0$} if $M_k^{(0)} = D_k$ is nonsingular for every $k \in \{0,\ldots,N-1\}$.
It has \emph{uniform periodic relative degree $r \geq 1$} if $D_k = 0$ for all $k$, $M_k^{(j)} = 0$ for all $k$ and $j = 1, \ldots, r-1$, and $M_k^{(r)}$ is nonsingular for every $k$.
\end{definition}

\begin{ass}\label{ass:rd}
The LPTV system~(\ref{eq:lptv_state})--(\ref{eq:lptv_output}) is square ($m = p$) and has uniform periodic relative degree $r \geq 0$.
\end{ass}

\begin{remark}\label{rem:sufficiency}
The nonsingularity of $M_k^{(r)}$ at every phase $k$ in Assumption~\ref{ass:rd} is a sufficient condition for invertibility, adopted here because it keeps the inverse in explicit closed form. Invertibility holds under weaker rank-type conditions on the Markov parameters via the structure algorithm \cite{silverman1969tac,perdon1992}; the scope of such extensions is discussed in Remark~\ref{rem:general_rd}.
\end{remark}

The \emph{inverse system problem} is to find a system
\begin{eqnarray}\label{eq:inv_system}
P^{-1}_r:\quad \zeta(k+1) &=& \Gamma_k \zeta(k) + \Lambda_k y(k+r), \nonumber\\
u(k) &=& \Omega_k \zeta(k) + \Pi_k y(k+r),
\end{eqnarray}
with $N$-periodic matrices $\Gamma_k$, $\Lambda_k$, $\Omega_k$, $\Pi_k$, such that, when $y$ is the output of $P$ driven by some input $u_{\rm ref}$, the output $\hat{u}$ of $P^{-1}_r$ reconstructs $u_{\rm ref}$ with $r$-step delay; for $r = 0$ the inverse is causal.

Two reconstruction regimes are distinguished from the outset. When the initial state of the inverse matches that of the plant, $\zeta(0) = x(0)$, the input is reconstructed \emph{exactly}: $\hat{u}(k) = u_{\rm ref}(k)$ for all $k \geq 0$. For mismatched initial conditions, the reconstruction error obeys $\epsilon(k+1) = \Gamma_k\,\epsilon(k)$ with $\epsilon(k) := x(k) - \zeta(k)$ (as established in Theorem~\ref{thm:main}), and the reconstruction is \emph{asymptotic}: $\hat{u}(k) - u_{\rm ref}(k) \to 0$ geometrically if and only if the inverse is stable in the following sense. Both regimes are illustrated in Section~\ref{sec:examples}.

\begin{definition}[Stable inverse system]\label{def:stable_inv}
The inverse system $P^{-1}_r$ in (\ref{eq:inv_system}) is \emph{stable} if the monodromy matrix
$\Phi_{\rm inv} := \Gamma_{N-1}\cdots\Gamma_0$ is Schur stable.
\end{definition}

\subsection{Closed-Form Inverse via Cyclic Reformulation}\label{sec:inverse}

The construction proceeds in three steps:
(i)~cyclic reformulation transforms the LPTV plant into the cycled LTI system $\check{P}:(\check{A},\check{B},\check{C},\check{D})$;
(ii)~the LTI inverse $\check{P}^{-1}$ is constructed by Lemma~\ref{lem:lti_inv_reldeg0} (for $r=0$) or Lemma~\ref{lem:lti_inv_relr} (for $r \geq 1$);
(iii)~\emph{parameter extraction} recovers the $N$-periodic inverse matrices from the block structure of $\check{P}^{-1}$.
Steps (i) and (ii) are classical. The substantive content of step (iii) is a structural invariance result rather than mere bookkeeping: for $r \geq 1$ the cycled inverse is \emph{not} in cycled form as obtained from Lemma~\ref{lem:lti_inv_relr} (its input and feedthrough matrices carry residual shifts), and the key fact established below is that this obstruction is exactly a phase shift that can be absorbed into the output signal, after which all four matrices of the inverse provably recover the cyclic ($1$-shift block-circulant / block-diagonal) structure. It is this preservation property that makes a real-valued, original-rate, $N$-periodic state-space realization of the inverse available as a literal block read-off, in closed form, for every uniform $r \geq 0$.

To formalize the shift structure, the following permutation matrix is used.

\begin{definition}[Cycled shift matrix]\label{def:shift}
For a positive integer $q$, the \emph{cycled shift matrix} $\check{S}_q \in \mathbb{R}^{Nq \times Nq}$ is the $1$-shift block-permutation matrix with $q \times q$ identity blocks: the $((k+1)\bmod N,\,k)$-th block of $\check{S}_q$ equals $I_q$ and all other blocks are zero.
That is, $\check{S}_q$ has the same structure as $\check{A}$ in (\ref{eq:checkAB}) with every $A_k$ replaced by $I_q$.
Its $r$-th power $\check{S}_q^r$ is the $r$-shift block-permutation matrix with $((k+r)\bmod N,\,k)$-th block equal to $I_q$.
Since $\check{S}_q^N = I_{Nq}$, one has $\check{S}_q^{-1} = \check{S}_q^{N-1} = \check{S}_q^\top$.
When the block size is clear from context, the subscript is omitted.
\end{definition}

The following lemma connects the periodic Markov parameters to the Markov parameters of the cycled LTI system; the proof relies on the shift-additivity of block-circulant products given in \cite{okajima1}.

\begin{lemma}\label{lem:markov_structure}
Under Assumption~\ref{ass:rd} with $r \geq 1$, the Markov parameters of the cycled system satisfy:
\begin{enumerate}
\item[(i)] $\check{C}\check{A}^{j-1}\check{B} = 0$ for $j = 1, \ldots, r-1$.
\item[(ii)] $\check{C}\check{A}^{r-1}\check{B}$ admits the factorization
\begin{equation}\label{eq:markov_factor}
\check{C}\check{A}^{r-1}\check{B} = \check{S}_p^{\,r}\,\mathrm{diag}\bigl(M_0^{(r)}, M_1^{(r)}, \ldots, M_{N-1}^{(r)}\bigr),
\end{equation}
where $\check{S}_p \in \mathbb{R}^{Np \times Np}$ is the cycled shift matrix with $p \times p$ identity blocks.
\end{enumerate}
In particular, $\check{C}\check{A}^{r-1}\check{B}$ is nonsingular, and the cycled system $\check{P}$ has relative degree $r$ in the LTI sense.
For $r = 0$, one has directly $\check{D} = \mathrm{diag}(M_0^{(0)},\ldots,M_{N-1}^{(0)})$, which is nonsingular.
\end{lemma}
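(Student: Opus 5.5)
The plan is to compute the block structure of $\check{A}^{j}$ explicitly and then read off the Markov parameters block by block. I will establish the shift-additivity directly rather than cite it. Write $\check{A} = \sum_{k=0}^{N-1} E_{(k+1)\bmod N,\,k}\otimes A_k$, where $E_{a,b}$ is the $N\times N$ matrix unit; similarly $\check{B}$ uses $B_k$. The first step is an induction on $j\ge 1$ giving
\[
\check{A}^{j} = \sum_{k=0}^{N-1} E_{(k+j)\bmod N,\,k}\otimes \Phi(k+j,k),
\]
with indices taken modulo $N$ as in the Notation. This is the shift-additivity: the product of a $j$-shift block matrix and a $1$-shift block matrix is a $(j+1)$-shift block matrix, since $E_{a+1,a}E_{a,b}=E_{a+1,b}$. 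The only care needed is that the nonzero blocks compose as $A_{k+j}\Phi(k+j,k)=\Phi(k+j+1,k)$.

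The next step multiplies by $\check{B}$ on the right and by $\check{C}=\sum_k E_{k,k}\otimes C_k$ on the left. This gives, for $j\ge 1$,
\[
\check{C}\check{A}^{j-1}\check{B}=\sum_{k=0}^{N-1} E_{(k+j)\bmod N,\,k}\otimes C_{k+j}\Phi(k+j,k+1)B_k=\sum_{k} E_{(k+j)\bmod N,\,k}\otimes M_k^{(j)}.
\]
Here the column index of $\check{B}$'s block $(k+1,k)$ feeds into $\check{A}^{j-1}$'s block $(k+j,k+1)$. For $j=1$ the same formula holds directly with $\check{A}^0=I$. This identification of the block entries with the periodic Markov parameters of Definition~\ref{def:per_markov} is the step I expect to be the main obstacle. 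It is not hard, but the modular index bookkeeping (wraparound when $k+j\ge N$) must be handled consistently, and I will rely on the convention that all subscripts are taken mod $N$.

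Part (i) then follows immediately, because $M_k^{(j)}=0$ for all $k$ and all $j=1,\dots,r-1$ under Assumption~\ref{ass:rd}. For part (ii), take $j=r$ and observe that $\sum_k E_{(k+r),k}\otimes M_k^{(r)} = \bigl(\sum_k E_{(k+r),k}\otimes I_p\bigr)\bigl(\sum_k E_{k,k}\otimes M_k^{(r)}\bigr)$. The first factor is exactly $\check{S}_p^{\,r}$ by Definition~\ref{def:shift}, which yields (\ref{eq:markov_factor}).

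Nonsingularity follows because $\check{S}_p^{\,r}$ is a permutation matrix and each $M_k^{(r)}$ is nonsingular, so the block-diagonal factor is invertible. Combined with $\check{D}=0$ (since all $D_k=0$) and part (i), this gives LTI relative degree $r$ in the sense of Lemma~\ref{lem:lti_inv_relr}. The case $r=0$ is immediate from (\ref{eq:checkCD2}).
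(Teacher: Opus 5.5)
Your proof is correct and follows essentially the same route as the paper. Both show that $\check{A}^{j-1}$ is a $(j-1)$-shift block matrix with blocks $\Phi(k+j-1,k)$, carry this through $\check{B}$ and $\check{C}$ to identify the $((k+j)\bmod N,\,k)$-th block of $\check{C}\check{A}^{j-1}\check{B}$ with $M_k^{(j)}$, and then factor off $\check{S}_p^{\,r}$. The only difference is that you prove the shift-additivity yourself with matrix units and the Kronecker product rule instead of citing it, which makes your argument self-contained.
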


\begin{proof}
From the definition of $\check{A}$ in (\ref{eq:checkAB}), the matrix $\check{A}^{j-1}$ is a block-circulant matrix with $(j-1)$-shift, whose $((i+j-1)\bmod N, \, i)$-th block equals $\Phi(i+j-1, i)$ (this follows by induction on $j$ from the shift-additivity of block-circulant products; see \cite[Lemma~1]{okajima1}). The product $\check{A}^{j-1}\check{B}$ is therefore a block-circulant with $j$-shift, whose $((k+j)\bmod N, \, k)$-th block is $\Phi(k+j, k+1)B_k$. Left-multiplying by the block-diagonal $\check{C}$ preserves the shift structure and gives the $((k+j)\bmod N, \, k)$-th block equal to $C_{k+j}\Phi(k+j, k+1)B_k = M_k^{(j)}$. Under Assumption~\ref{ass:rd}, $M_k^{(j)} = 0$ for $j < r$, which proves~(i). For $j = r$, the matrix $\check{C}\check{A}^{r-1}\check{B}$ is an $r$-shift block matrix whose $((k+r)\bmod N, \, k)$-th block is $M_k^{(r)}$. This is precisely $\check{S}_p^{\,r}\,\mathrm{diag}(M_0^{(r)}, \ldots, M_{N-1}^{(r)})$, since $\check{S}_p^{\,r}$ contributes the $r$-shift permutation and the block-diagonal factor supplies the blocks $M_k^{(r)}$. Since each $M_k^{(r)}$ is nonsingular by Assumption~\ref{ass:rd}, the product is nonsingular. The statement for $r = 0$ is immediate from (\ref{eq:checkCD2}).
\end{proof}

The factorization (\ref{eq:markov_factor}) separates the Markov parameter into a pure shift ($\check{S}_p^{\,r}$) and a block-diagonal factor containing the periodic Markov parameters. This decomposition is the key to the following main result, which gives a closed-form $N$-periodic state-space realization of the inverse directly in terms of the plant matrices.

For later reference, the state matrix of the LTI inverse of the cycled plant $\check{P}$ (Lemma~\ref{lem:lti_inv_reldeg0} for $r = 0$; Lemma~\ref{lem:lti_inv_relr} for $r \geq 1$) is denoted by
\begin{equation}\label{eq:Ainv_def}
\check{A}_{\rm inv} :=
\begin{cases}
\check{A} - \check{B}\check{D}^{-1}\check{C}, & r = 0,\\[2pt]
\check{A} - \check{B}\bigl(\check{C}\check{A}^{r-1}\check{B}\bigr)^{-1}\check{C}\check{A}^{r}, & r \geq 1,
\end{cases}
\end{equation}
which is well defined under Assumption~\ref{ass:rd} by Lemma~\ref{lem:markov_structure} and reappears in the stability analysis of Section~\ref{sec:stability}.

\begin{theorem}\label{thm:main}
Under Assumption~\ref{ass:rd}, the $r$-step-delayed inverse system of the LPTV plant~(\ref{eq:lptv_state})--(\ref{eq:lptv_output}) is given by (\ref{eq:inv_system}) with
\begin{eqnarray}
\Gamma_k &=& A_k - B_k \bigl[M_k^{(r)}\bigr]^{-1} C_{k+r}\, \Phi(k+r, k), \label{eq:Gammak_r}\\
\Lambda_k &=& B_k \bigl[M_k^{(r)}\bigr]^{-1},           \label{eq:Lambdak_r}\\
\Omega_k  &=& -\bigl[M_k^{(r)}\bigr]^{-1} C_{k+r}\, \Phi(k+r, k),          \label{eq:Omegak_r}\\
\Pi_k     &=& \bigl[M_k^{(r)}\bigr]^{-1},               \label{eq:Pik_r}
\end{eqnarray}
where all indices are modulo $N$ and $\Phi(k+r,k) = A_{k+r-1}\cdots A_{k+1}A_k$ with $\Phi(k,k) = I$.
The inverse reconstructs $u(k)$ from $y(k+r)$: if $\zeta(0) = x(0)$, then $\hat{u}(k) = u(k)$ for all $k \geq 0$; for $\zeta(0) \neq x(0)$, the reconstruction error satisfies $\hat{u}(k) - u(k) = -\Omega_k\,\epsilon(k)$ with $\epsilon(k+1) = \Gamma_k\,\epsilon(k)$.
For $r = 0$, the inverse is causal.
\end{theorem}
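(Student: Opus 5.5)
The plan has two parts: a direct time-domain verification on the LPTV system, and a cyclic-structure computation that shows the formulas are exactly what the parameter extraction of Section~\ref{sec:inverse} produces. The direct verification carries the reconstruction claims on its own. The cyclic part explains where the formulas come from and supplies the identity $\check{A}_{\rm inv}$ needed later.

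\textbf{Direct verification.} Iterating (\ref{eq:lptv_state}) from time $k$ gives
\[
x(k+r)=\Phi(k+r,k)x(k)+\sum_{j=0}^{r-1}\Phi(k+r,k+j+1)B_{k+j}u(k+j).
\]
Applying $C_{k+r}$ and using $D_k=0$, the term with index $j$ is $C_{k+r}\Phi(k+r,k+j+1)B_{k+j}u(k+j)=M_{k+j}^{(r-j)}u(k+j)$. By Assumption~\ref{ass:rd}, every such term with $j\ge1$ vanishes, since then $1\le r-j\le r-1$. This leaves
\[
y(k+r)=C_{k+r}\Phi(k+r,k)x(k)+M_k^{(r)}u(k).
\]
For $r=0$ this is just (\ref{eq:lptv_output}). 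Solving for the input,
\[
u(k)=[M_k^{(r)}]^{-1}\bigl(y(k+r)-C_{k+r}\Phi(k+r,k)x(k)\bigr)=\Omega_k x(k)+\Pi_k y(k+r).
\]
Substituting this into (\ref{eq:lptv_state}) gives $x(k+1)=\Gamma_k x(k)+\Lambda_k y(k+r)$. Hence the plant state obeys the same recursion as $\zeta$.

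Set $\epsilon(k)=x(k)-\zeta(k)$ and subtract the two recursions. This yields $\epsilon(k+1)=\Gamma_k\epsilon(k)$ and $\hat u(k)-u(k)=-\Omega_k\epsilon(k)$. Matched initial conditions give $\epsilon\equiv0$, so the reconstruction is exact. Causality for $r=0$ is immediate, since $u(k)$ then depends only on $y(k)$.

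\textbf{Cyclic computation.} I will show that, after an output phase shift, the LTI inverse of $\check P$ from Lemma~\ref{lem:lti_inv_relr} is precisely the cycled form of $(\Gamma_k,\Lambda_k,\Omega_k,\Pi_k)$.

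First, by Lemma~\ref{lem:markov_structure}(ii),
\[
(\check C\check A^{r-1}\check B)^{-1}=\mathrm{diag}([M_k^{(r)}]^{-1})\,\check S_p^{-r}.
\]
Second, as in the proof of Lemma~\ref{lem:markov_structure}, $\check C\check A^r$ is an $r$-shift block matrix with $((k+r),k)$ block $C_{k+r}\Phi(k+r,k)$. Therefore $\check S_p^{-r}\check C\check A^r=\mathrm{diag}(C_{k+r}\Phi(k+r,k))$ is block-diagonal. Consequently:
\begin{itemize}
\item $\check B\,\mathrm{diag}([M_k^{(r)}]^{-1})\,\check S_p^{-r}\check C\check A^r$ is a $1$-shift block-circulant with blocks $B_k[M_k^{(r)}]^{-1}C_{k+r}\Phi(k+r,k)$;
\item $\check A_{\rm inv}$ is the cycled matrix of $\Gamma_k$;
\item the output matrix $-\mathrm{diag}([M_k^{(r)}]^{-1})\,\check S_p^{-r}\check C\check A^r$ is block-diagonal with blocks $\Omega_k$.
\end{itemize}
The input matrix $\check B\,\mathrm{diag}(\cdot)\,\check S_p^{-r}$ and the feedthrough $\mathrm{diag}(\cdot)\,\check S_p^{-r}$ each carry a residual $\check S_p^{-r}$. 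This factor is absorbed into the signal: $\check S_p^{-r}\check y(k+r)$ has its nonzero block at phase $(k+r)-r=k$, equal to $y(k+r)$. The shifted signal is therefore exactly the cycled version of $y(k+r)$ at phase $k$. With this absorption, the input and feedthrough matrices become the cycled forms of $\Lambda_k$ and $\Pi_k$, and the block read-off via Lemma~\ref{lem:equiv}(i) yields (\ref{eq:Gammak_r})--(\ref{eq:Pik_r}).

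The main obstacle is this phase-shift bookkeeping. Specifically, one must confirm that $\check S^{-r}$ maps the cycled signal at time $k+r$ to the correct phase, and that $\check S^{-r}\check C\check A^r$ is genuinely block-diagonal with the stated blocks, including the index wraparound modulo $N$. The direct verification above makes the reconstruction claims independent of this step.
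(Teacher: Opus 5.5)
Your proposal is correct and essentially reproduces the paper's proof: the cyclic computation (factoring $(\check C\check A^{r-1}\check B)^{-1}=\mathrm{diag}([M_k^{(r)}]^{-1})\,\check S_p^{-r}$, showing that $\check S_p^{-r}\check C\check A^r$ is block-diagonal, and absorbing the residual shift into $\check S_p^{-r}\check y(k+r)$) is the paper's Step~3, and the direct verification via $y(k+r)=C_{k+r}\Phi(k+r,k)x(k)+M_k^{(r)}u(k)$ is the paper's closing verification step. The only difference is ordering and emphasis: you correctly note that the direct verification alone establishes the reconstruction and error-dynamics claims for all $r\ge 0$, including $r=0$, whereas the paper also handles $r=0$ in the cyclic step through the block-diagonal $\check D^{-1}$.
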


\begin{proof}
\textit{Step~1 (Cyclic reformulation).}
Apply cyclic reformulation to obtain the cycled LTI system $\check{P}:(\check{A}, \check{B}, \check{C}, \check{D})$. By Lemma~\ref{lem:markov_structure}, $\check{P}$ has relative degree $r$ in the LTI sense: for $r = 0$, $\check{D}$ is nonsingular; for $r \geq 1$, $\check{D} = 0$ and $\check{C}\check{A}^{r-1}\check{B}$ is nonsingular.

\textit{Step~2 (LTI inverse construction).}
For $r = 0$, Lemma~\ref{lem:lti_inv_reldeg0} gives the causal LTI inverse $\check{P}^{-1}$ with
state matrix $\check{A}_{\rm inv}$ in (\ref{eq:Ainv_def}),
$\check{B}_{\rm inv} = \check{B}\check{D}^{-1}$,
$\check{C}_{\rm inv} = -\check{D}^{-1}\check{C}$,
$\check{D}_{\rm inv} = \check{D}^{-1}$.
For $r \geq 1$, Lemma~\ref{lem:lti_inv_relr} gives the $r$-step-delayed LTI inverse with
state matrix $\check{A}_{\rm inv}$ in (\ref{eq:Ainv_def}) and
\begin{eqnarray}
\check{B}_{\rm inv} &=& \check{B}(\check{C}\check{A}^{r-1}\check{B})^{-1}, \label{eq:Binv_r}\\
\check{C}_{\rm inv} &=& -(\check{C}\check{A}^{r-1}\check{B})^{-1}\check{C}\check{A}^r, \label{eq:Cinv_r}\\
\check{D}_{\rm inv} &=& (\check{C}\check{A}^{r-1}\check{B})^{-1}, \label{eq:Dinv_r}
\end{eqnarray}
taking $\check{y}(k+r)$ as input and producing $\check{u}(k)$ as output.

\textit{Step~3 (Parameter extraction).}
Consider first $r = 0$.
Since $\check{D}^{-1} = \mathrm{diag}(D_0^{-1},\ldots,D_{N-1}^{-1})$ is block-diagonal, the product $\check{B}\check{D}^{-1}\check{C}$ retains the $1$-shift block-circulant structure of $\check{B}$, with the $((k+1)\bmod N, \, k)$-th block equal to $B_k D_k^{-1} C_k$; hence $\check{A}_{\rm inv}$ is $1$-shift block-circulant with the corresponding block $A_k - B_k D_k^{-1} C_k$. By the same reasoning, $\check{B}_{\rm inv}$ is $1$-shift block-circulant with blocks $B_k D_k^{-1}$, while $\check{C}_{\rm inv}$ and $\check{D}_{\rm inv}$ are block-diagonal with blocks $-D_k^{-1}C_k$ and $D_k^{-1}$, respectively. Thus $\check{P}^{-1}$ possesses the cycled LTI structure, and (\ref{eq:Gammak_r})--(\ref{eq:Pik_r}) with $M_k^{(0)} = D_k$ and $\Phi(k,k) = I$ are read off block-by-block.

Consider now $r \geq 1$.
From the cycled signal definition~(\ref{eq:cycled_input}), the cycled output satisfies $\check{y}(k+r) = e_{i'} \otimes y(k+r)$ where $i' = (k+r) \bmod N$. Introducing the \emph{shifted cycled output}
\begin{equation}\label{eq:shifted_cycled_output}
\check{y}_r(k) := \check{S}_p^{-r}\,\check{y}(k+r),
\end{equation}
one verifies that $\check{y}_r(k) = e_i \otimes y(k+r)$ with $i = k \bmod N$, so $\check{y}_r(k)$ is a standard cycled signal indexed by the current time phase $k$. Substituting $\check{y}(k+r) = \check{S}_p^{\,r}\,\check{y}_r(k)$ into the cycled inverse system yields
\begin{eqnarray}
\check{\zeta}(k+1) &=& \check{A}_{\rm inv}\,\check{\zeta}(k) + \check{B}_{\rm inv}\check{S}_p^{\,r}\,\check{y}_r(k), \label{eq:cycled_inv_shifted_state}\\
\check{u}(k)   &=& \check{C}_{\rm inv}\,\check{\zeta}(k) + \check{D}_{\rm inv}\check{S}_p^{\,r}\,\check{y}_r(k). \label{eq:cycled_inv_shifted_output}
\end{eqnarray}
It is now shown that all four system matrices in (\ref{eq:cycled_inv_shifted_state})--(\ref{eq:cycled_inv_shifted_output}) have the cycled LTI structure. The block-shift bookkeeping below extends to rectangular blocks: for matrices partitioned into $N \times N$ conformable blocks, the block-shift index of a product equals the sum of the factors' block-shift indices, irrespective of the individual block sizes, a block-diagonal matrix having shift $0$.
Using the factorization $\check{C}\check{A}^{r-1}\check{B} = \check{S}_p^{\,r}\,\check{M}$ from (\ref{eq:markov_factor}), where $\check{M} := \mathrm{diag}(M_0^{(r)}, \ldots, M_{N-1}^{(r)})$, its inverse is $(\check{C}\check{A}^{r-1}\check{B})^{-1} = \check{M}^{-1}\check{S}_p^{-r}$.

\emph{Feedthrough matrix}: $\check{D}_{\rm inv}\check{S}_p^{\,r} = \check{M}^{-1}\check{S}_p^{-r}\,\check{S}_p^{\,r} = \check{M}^{-1} = \mathrm{diag}([M_0^{(r)}]^{-1}, \ldots, [M_{N-1}^{(r)}]^{-1})$, which is block-diagonal.

\emph{Input matrix}: $\check{B}_{\rm inv}\check{S}_p^{\,r} = \check{B}\,\check{M}^{-1}\check{S}_p^{-r}\,\check{S}_p^{\,r} = \check{B}\,\check{M}^{-1}$. Since $\check{B}$ is $1$-shift block-circulant and $\check{M}^{-1}$ is block-diagonal, the product is $1$-shift block-circulant with $((k+1)\bmod N,\,k)$-th block $B_k[M_k^{(r)}]^{-1}$.

\emph{Output matrix}: $\check{C}_{\rm inv} = -\check{M}^{-1}\check{S}_p^{-r}\,\check{C}\check{A}^r$. Since $\check{C}$ is block-diagonal (i.e., $0$-shift) and $\check{A}^r$ is $r$-shift, the product $\check{S}_p^{-r}\check{C}\check{A}^r$ has shift $(-r) + 0 + r = 0$, i.e., it is block-diagonal; its $k$-th diagonal block is $C_{k+r}\Phi(k+r,k)$. Hence $\check{C}_{\rm inv}$ is block-diagonal with blocks $-[M_k^{(r)}]^{-1}C_{k+r}\Phi(k+r,k)$.

\emph{State matrix}: $\check{A}_{\rm inv} = \check{A} - \check{B}\,\check{M}^{-1}\check{S}_p^{-r}\,\check{C}\check{A}^r$. Since $\check{S}_p^{-r}\check{C}\check{A}^r$ is block-diagonal and $\check{B}\,\check{M}^{-1}$ is $1$-shift block-circulant, their product is $1$-shift block-circulant with $((k+1)\bmod N,\,k)$-th block $B_k[M_k^{(r)}]^{-1}C_{k+r}\Phi(k+r,k)$. Therefore $\check{A}_{\rm inv}$ retains the $1$-shift block-circulant structure.

In summary, with $\check{y}_r(k)$ as input, the system (\ref{eq:cycled_inv_shifted_state})--(\ref{eq:cycled_inv_shifted_output}) has the standard cycled LTI structure, and $\Gamma_k$, $\Lambda_k$, $\Omega_k$, $\Pi_k$ in (\ref{eq:Gammak_r})--(\ref{eq:Pik_r}) are literally the corresponding blocks of $\check{A}_{\rm inv}$, $\check{B}_{\rm inv}\check{S}_p^{\,r}$, $\check{C}_{\rm inv}$, and $\check{D}_{\rm inv}\check{S}_p^{\,r}$, respectively. This completes the derivation of the closed-form inverse matrices.

\textit{Verification.}
For any $r \geq 0$, the relative degree assumption yields the per-step identity
\begin{equation}\label{eq:io_identity}
y(k+r) = C_{k+r}\,\Phi(k+r,k)\,x(k) + M_k^{(r)} u(k),
\end{equation}
which for $r = 0$ is the output equation itself, and for $r \geq 1$ follows from $x(k+r) = \Phi(k+r,k)x(k) + \sum_{j=0}^{r-1}\Phi(k+r,k+j+1)B_{k+j}u(k+j)$ together with $C_{k+r}\Phi(k+r,k+j+1)B_{k+j} = M_{k+j}^{(r-j)} = 0$ for $j = 1,\ldots,r-1$.
Driving (\ref{eq:inv_system}) by this $y(k+r)$ and using (\ref{eq:Gammak_r})--(\ref{eq:Pik_r}) gives $\hat{u}(k) = u(k) - \Omega_k\,\epsilon(k)$ and $\epsilon(k+1) = \Gamma_k\,\epsilon(k)$ for $\epsilon(k) = x(k) - \zeta(k)$, which proves the reconstruction claims.
\end{proof}

The relative degree zero case is recorded separately as a consistency check of the unified framework and to make explicit its relationship to prior work.

\begin{corollary}\label{cor:rd0}
Suppose $m = p$ and $D_k$ is nonsingular for every $k$ (uniform periodic relative degree $r = 0$). Then a causal inverse of the LPTV plant is given by (\ref{eq:inv_system}) with $r = 0$ and
\begin{eqnarray}
\Gamma_k &=& A_k - B_k D_k^{-1} C_k,\quad
\Lambda_k = B_k D_k^{-1}, \label{eq:rd0_formulas}\\
\Omega_k &=& -D_k^{-1} C_k,\quad
\Pi_k = D_k^{-1}, \nonumber
\end{eqnarray}
and this inverse satisfies $P^{-1}P = PP^{-1} = I$ as input--output operators (with matched initial conditions), the latter following by direct substitution.
\end{corollary}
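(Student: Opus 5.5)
The formulas (\ref{eq:rd0_formulas}) are Theorem~\ref{thm:main} specialised to $r=0$. I will read them off, then prove the two operator identities by direct per-step computation.

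\textbf{The formulas.} With $r=0$ we have $M_k^{(0)}=D_k$ and $\Phi(k,k)=I$. Substituting these into (\ref{eq:Gammak_r})--(\ref{eq:Pik_r}) gives exactly (\ref{eq:rd0_formulas}). Theorem~\ref{thm:main} also states that for $r=0$ the inverse is causal, since it is driven by $y(k)$.

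\textbf{The identity $P^{-1}P=I$.} This is the reconstruction claim of Theorem~\ref{thm:main}. Feed $y(k)=C_kx(k)+D_ku(k)$ into the inverse with $\zeta(0)=x(0)$. The error $\epsilon=x-\zeta$ obeys $\epsilon(k+1)=\Gamma_k\epsilon(k)$ with $\epsilon(0)=0$, so $\epsilon\equiv 0$. Then $\hat u(k)=u(k)-\Omega_k\epsilon(k)=u(k)$. To keep the corollary self-contained, I will check this once: substitute $y$ into $\hat u=-D_k^{-1}C_k\zeta+D_k^{-1}y$ to get $\hat u=u+D_k^{-1}C_k\epsilon$.

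\textbf{The identity $PP^{-1}=I$.} This needs a separate substitution in the opposite order. Let an arbitrary sequence $y$ drive $P^{-1}$, producing $\zeta$ and $\hat u$. Then apply $\hat u$ to $P$ with $x(0)=\zeta(0)$, producing a state $x$ and output $\tilde y$.

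The key step is to show by induction that $x(k)=\zeta(k)$ for all $k$. The base case holds by construction. For the induction step, expand the state update:
\[
x(k+1)=A_kx(k)+B_k\hat u(k)=A_k\zeta(k)-B_kD_k^{-1}C_k\zeta(k)+B_kD_k^{-1}y(k),
\]
which equals $\Gamma_k\zeta(k)+\Lambda_ky(k)=\zeta(k+1)$. With $x=\zeta$, the output is
\[
\tilde y(k)=C_k\zeta(k)+D_k\bigl(-D_k^{-1}C_k\zeta(k)+D_k^{-1}y(k)\bigr)=y(k),
\]
which proves $PP^{-1}=I$ under matched initial conditions.

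Alternatively, the same identities follow from Lemma~\ref{lem:lti_inv_reldeg0} applied to the cycled system, together with the correspondence of Lemma~\ref{lem:equiv}(i). I will use the direct computation because it avoids translating initial conditions between the two representations.

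There is no real obstacle. The only point needing care is the meaning of ``matched initial conditions'' for $PP^{-1}$: the plant's initial state must be set equal to the inverse's initial state, $x(0)=\zeta(0)$, rather than the other way round. I will state this explicitly so that the induction is well posed.
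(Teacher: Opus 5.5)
Your proposal is correct and follows the paper's proof. The formulas are Theorem~\ref{thm:main} specialised to $r=0$, with $M_k^{(0)}=D_k$ and $\Phi(k,k)=I$; $P^{-1}P=I$ comes from the exact reconstruction property, and $PP^{-1}=I$ from direct substitution, which you spell out through the induction $x(k)=\zeta(k)$ (your explicit convention $x(0)=\zeta(0)$ for the second identity is a helpful clarification the paper leaves implicit).
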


\begin{proof}
The formulas follow from Theorem~\ref{thm:main} with $r = 0$, $M_k^{(0)} = D_k$, and $\Phi(k,k) = I$, and the per-step formulas themselves coincide with those appearing as a starting point in the generalized-inverse framework of \cite{varga2004}. The identities $P^{-1}P = I$ and $PP^{-1} = I$ follow from the exact reconstruction property and direct substitution, respectively.
\end{proof}

\subsection{Stability Analysis}\label{sec:stability}

The stability of the inverse system depends on whether the error dynamics $\epsilon(k+1) = \Gamma_k \epsilon(k)$ are asymptotically stable, i.e., whether the monodromy matrix
\begin{equation}\label{eq:monodromy_inv}
\Phi_{\rm inv} := \Gamma_{N-1}\cdots\Gamma_1\Gamma_0
\end{equation}
with $\Gamma_k$ given by (\ref{eq:Gammak_r}) is Schur stable. This condition is characterized through the invariant zeros of the cycled plant.

\begin{theorem}\label{thm:stability}
Under Assumption~\ref{ass:rd} (uniform periodic relative degree $r \geq 0$), the following statements are equivalent:
\begin{enumerate}
\item[(i)] The inverse system $P^{-1}_r$ of Theorem~\ref{thm:main} is stable, i.e., $\rho(\Phi_{\rm inv}) < 1$.
\item[(ii)] All invariant zeros $z_0$ of the cycled plant $\check{P}$ satisfy $|z_0| < 1$.
\item[(iii)] The matrix $\check{A}_{\rm inv}$ in (\ref{eq:Ainv_def}) is Schur stable.
\end{enumerate}
\end{theorem}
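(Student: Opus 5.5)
I would prove the two equivalences (i)$\Leftrightarrow$(iii) and (iii)$\Leftrightarrow$(ii) separately, so that everything reduces to facts already established.

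\emph{(i)$\Leftrightarrow$(iii).} In Step~3 of the proof of Theorem~\ref{thm:main} it was shown that $\check{A}_{\rm inv}$ is $1$-shift block-circulant. Its $((k+1)\bmod N,\,k)$-th block is exactly $\Gamma_k$ in (\ref{eq:Gammak_r}), both for $r=0$ and for $r\geq 1$. Applying Lemma~\ref{lem:equiv}(iii) with $\bar{A}_k := \Gamma_k$ gives
\[
\rho(\check{A}_{\rm inv}) = \rho(\Gamma_{N-1}\cdots\Gamma_0)^{1/N} = \rho(\Phi_{\rm inv})^{1/N}.
\]
Hence $\rho(\check{A}_{\rm inv})<1$ if and only if $\rho(\Phi_{\rm inv})<1$, which is (i)$\Leftrightarrow$(iii).

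\emph{(ii)$\Leftrightarrow$(iii).} This is the LTI statement applied to $\check{P}$. By Lemma~\ref{lem:markov_structure}, $\check{P}$ is square and has LTI relative degree $r$: for $r=0$, $\check{D}$ is nonsingular; for $r\geq1$, $\check{D}=0$, $\check{C}\check{A}^{j-1}\check{B}=0$ for $j<r$, and $\check{C}\check{A}^{r-1}\check{B}$ is nonsingular. So the hypotheses of Lemma~\ref{lem:lti_inv_reldeg0} (for $r=0$) or Lemma~\ref{lem:lti_inv_relr} (for $r\geq1$) hold for $\check{P}$.
\begin{itemize}
\item For $r=0$, the eigenvalues of $\check{A}-\check{B}\check{D}^{-1}\check{C}$ are exactly the invariant zeros of $\check{P}$.
\item For $r\geq1$, the spectrum of $\check{A}_{\rm inv}$ consists of the invariant zeros of $\check{P}$ together with eigenvalues at the origin.
\end{itemize}
Eigenvalues at the origin do not affect Schur stability. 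Therefore $\check{A}_{\rm inv}$ is Schur stable if and only if every invariant zero satisfies $|z_0|<1$.

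\emph{Main obstacle.} The delicate point is making sure the LTI lemmas apply verbatim to $\check{P}$ and that the matrix they produce is literally $\check{A}_{\rm inv}$ in (\ref{eq:Ainv_def}). In particular, the phase-shift substitution $\check{y}(k+r)=\check{S}_p^{\,r}\check{y}_r(k)$ used in Theorem~\ref{thm:main} changes only the input and feedthrough matrices, never the state matrix, so the spectrum is untouched. A second check is the case where $\check{P}$ has zeros at infinity or a pencil that is not of full normal rank. Nonsingularity of $\check{C}\check{A}^{r-1}\check{B}$ guarantees the pencil is regular, so the invariant zeros form a finite set.

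If Lemma~\ref{lem:lti_inv_relr} is deemed insufficiently precise about multiplicities, I would instead verify directly, for $\lambda\neq0$, the pencil identity
\[
\det\begin{bmatrix}\lambda I-\check{A}&-\check{B}\\ \check{C}&0\end{bmatrix}=0 \iff \det(\lambda I-\check{A}_{\rm inv})=0.
\]
Every nonzero eigenvector $v$ of $\check{A}_{\rm inv}$ yields the zero direction $(v,\,-(\check{C}\check{A}^{r-1}\check{B})^{-1}\check{C}\check{A}^r v)$, using the relations $\check{C}\check{A}^{j}v=0$ for $j<r$. Conversely, any zero direction with $\lambda\neq0$ satisfies these relations and hence gives an eigenvector of $\check{A}_{\rm inv}$.
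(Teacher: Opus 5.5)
Your proposal is correct and follows the paper's own proof: (i)$\Leftrightarrow$(iii) comes from the $1$-shift block-circulant structure of $\check{A}_{\rm inv}$ with blocks $\Gamma_k$ (Step~3 of Theorem~\ref{thm:main}) together with Lemma~\ref{lem:equiv}(iii), and (ii)$\Leftrightarrow$(iii) comes from Lemma~\ref{lem:lti_inv_reldeg0} or Lemma~\ref{lem:lti_inv_relr} applied to $\check{P}$, with the extra eigenvalues at the origin not affecting Schur stability. Your optional pencil argument for $\lambda\neq0$ is also valid, and it makes the reliance on Lemma~\ref{lem:lti_inv_relr} self-contained. For that argument, the relations $\check{C}\check{A}^{j}v=0$ for $j<r$ follow by downward induction from $\check{C}\check{A}^{r-1}\check{A}_{\rm inv}=0$, which uses $\lambda\neq0$; the paper simply cites the LTI lemma instead.
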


\begin{proof}
(i)$\Leftrightarrow$(iii): as established in Step~3 of the proof of Theorem~\ref{thm:main}, $\check{A}_{\rm inv}$ has the $1$-shift block-circulant structure~(\ref{eq:checkAB}) with $\Gamma_k$ in the $((k{+}1)\bmod N,\,k)$-th block, so Lemma~\ref{lem:equiv}(iii) yields $\rho(\check{A}_{\rm inv}) = \rho(\Phi_{\rm inv})^{1/N}$.
(ii)$\Leftrightarrow$(iii): by Lemma~\ref{lem:lti_inv_reldeg0} (for $r=0$) or Lemma~\ref{lem:lti_inv_relr} (for $r \geq 1$) applied to $\check{P}$, the spectrum of $\check{A}_{\rm inv}$ consists of the invariant zeros of $\check{P}$, together with additional eigenvalues at the origin when $r \geq 1$; the latter do not affect Schur stability.
\end{proof}

\begin{definition}[Periodically minimum phase]\label{def:per_min_phase}
The LPTV system~(\ref{eq:lptv_state})--(\ref{eq:lptv_output}) is called \emph{periodically minimum phase} if all invariant zeros of the cycled plant $\check{P}$ satisfy $|z_0| < 1$.
\end{definition}

\begin{corollary}\label{cor:zeros}
The LPTV system is periodically minimum phase if and only if $\rho(\Phi_{\rm inv}) < 1$, i.e., the inverse $P^{-1}_r$ of Theorem~\ref{thm:main} is stable.
\end{corollary}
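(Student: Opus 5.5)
This corollary is essentially a restatement of Theorem~\ref{thm:stability} in the language of Definition~\ref{def:per_min_phase}, so the plan is to chain definitions with the established equivalence (i)$\Leftrightarrow$(ii). By Definition~\ref{def:per_min_phase}, the LPTV plant is periodically minimum phase exactly when every invariant zero $z_0$ of $\check{P}$ satisfies $|z_0|<1$, which is statement (ii) of Theorem~\ref{thm:stability}. Statement (i) of that theorem is $\rho(\Phi_{\rm inv})<1$. By Definition~\ref{def:stable_inv}, this is exactly stability of $P^{-1}_r$, since Schur stability of $\Phi_{\rm inv}$ means $\rho(\Phi_{\rm inv})<1$. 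Invoking (i)$\Leftrightarrow$(ii) under Assumption~\ref{ass:rd} then gives the claim.

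For a self-contained argument, I would route through (iii) explicitly. First, by Step~3 of the proof of Theorem~\ref{thm:main}, $\check{A}_{\rm inv}$ is $1$-shift block-circulant with blocks $\Gamma_k$. Lemma~\ref{lem:equiv}(iii) therefore gives $\rho(\check{A}_{\rm inv})=\rho(\Phi_{\rm inv})^{1/N}$, so $\rho(\Phi_{\rm inv})<1$ if and only if $\check{A}_{\rm inv}$ is Schur stable. Second, Lemma~\ref{lem:lti_inv_reldeg0} (for $r=0$) or Lemma~\ref{lem:lti_inv_relr} (for $r\ge1$) applies to $\check{P}$ because, by Lemma~\ref{lem:markov_structure}, $\check{P}$ has LTI relative degree $r$ with a nonsingular leading Markov parameter. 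These lemmas identify the spectrum of $\check{A}_{\rm inv}$ with the invariant zeros of $\check{P}$, plus extra eigenvalues at the origin when $r\ge1$. Since those extra eigenvalues have modulus $0<1$, Schur stability of $\check{A}_{\rm inv}$ is equivalent to all invariant zeros lying in the open unit disk.

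The only point needing care is that the equivalence rests on the spectral identification in Lemma~\ref{lem:lti_inv_relr}, which is quoted without proof. I would double-check that, for the square nonsingular-leading-Markov-parameter case, the invariant zeros of $\check{P}$ coincide with the nonzero eigenvalues of $\check{A}_{\rm inv}$, counted with multiplicity in the sense needed. Only the modulus condition matters, not exact multiplicities, so any ambiguity about zeros at the origin is harmless. Beyond that, the corollary is immediate.
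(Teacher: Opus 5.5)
Your proposal is correct and follows the paper's route: the corollary is Theorem~\ref{thm:stability}, (i)$\Leftrightarrow$(ii), read through Definitions~\ref{def:stable_inv} and~\ref{def:per_min_phase}, and your expansion via (iii) reproduces the paper's proof of that theorem (Lemma~\ref{lem:equiv}(iii), plus Lemma~\ref{lem:lti_inv_reldeg0} or~\ref{lem:lti_inv_relr} applied to $\check{P}$ via Lemma~\ref{lem:markov_structure}). One small correction: the invariant zeros of $\check{P}$ need not be exactly the nonzero eigenvalues of $\check{A}_{\rm inv}$, since a zero may itself lie at the origin; the spectrum of $\check{A}_{\rm inv}$ is the invariant zeros together with $rNm$ additional eigenvalues at the origin, which, as you note, does not affect the modulus condition.
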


\begin{remark}[Terminology]\label{rem:terminology}
The term ``periodically minimum phase'' is used here solely as a shorthand for the zero condition in Definition~\ref{def:per_min_phase}, since the invariant zeros of $\check{P}$ are exactly the poles of the inverse. In the classical usage, minimum phase additionally requires the poles of a minimal realization (equivalently, the characteristic multipliers of the plant) to lie in the open unit disk; this is a separate property of the plant that is not needed for the stability of the inverse error dynamics.
Under Assumption~\ref{ass:rd}, the system pencil of $\check{P}$ has full normal rank $Nn + Nm$, so the invariant zeros are equivalently the values $z_0$ at which
\begin{equation}\label{eq:per_zeros}
\mathrm{rank}\begin{bmatrix} z_0 I - \check{A} & -\check{B} \\ \check{C} & \check{D} \end{bmatrix} < Nn + Nm;
\end{equation}
for square systems that are reachable and observable they coincide with the transmission zeros of the transfer matrix, where it should be noted that for periodic systems the reachability and observability subspaces may be time-varying \cite{colaneri1995}. 
The invariant zeros of $\check{P}$ are the \emph{periodic invariant zeros} of the LPTV system \cite{grasselli1988,grasselli1991} and coincide with the invariant zeros of the $N$-lifted representation; see \cite{bittanti2009} for a unified treatment.
Theorem~\ref{thm:stability} therefore characterizes stable inversion in terms of a standard structural property of the periodic system.
\end{remark}

\begin{remark}[Non-uniform relative degree]\label{rem:general_rd}
When the periodic relative degree is not uniform (e.g., $D_k$ is nonsingular at some phases and zero at others, as in multirate sampled-data systems with structural zeros in the I/O matrices), neither case of Assumption~\ref{ass:rd} applies: the cycled feedthrough $\check{D}$ is singular despite some nonsingular diagonal blocks, and the invertibility of $\check{P}$ must be analyzed through its Smith form or the structure algorithm \cite{kono1981ijc,perdon1992}. Extending the block-wise parameter extraction to this setting is nontrivial, because the cyclic structure of the inverse may not be preserved in a simple form, and is left for future work.
\end{remark}

\section{Numerical Examples}\label{sec:examples}

Two examples illustrate the unified construction of Theorem~\ref{thm:main} for $r = 1$ and $r = 2$. Both examples share the same state and input matrices $A_k$, $B_k$ and differ only in $C_k$, demonstrating how the periodic relative degree is determined by the output matrices through the periodic Markov parameters. The first example illustrates the asymptotic regime (mismatched initial condition), and the second illustrates the exact regime (matched initial condition).

Both examples use $m = p = 1$, $n = 2$, $N = 3$, and $D_k = 0$ for all $k$, with common matrices
\begin{align}\label{eq:ex_common}
A_0 &= \begin{bmatrix} 0 & 1 \\ -0.5 & 0.8 \end{bmatrix},\quad
A_1 = \begin{bmatrix} 0.3 & 0.5 \\ -0.4 & 0.6 \end{bmatrix},\quad
A_2 = \begin{bmatrix} 0.2 & 0.4 \\ -0.3 & 0.5 \end{bmatrix}, \nonumber\\
B_0 &= \begin{bmatrix} 0 \\ 1 \end{bmatrix},\quad
B_1 = \begin{bmatrix} 0.5 \\ 1 \end{bmatrix},\quad
B_2 = \begin{bmatrix} 0.5 \\ 1 \end{bmatrix}.
\end{align}
The plant monodromy $\Phi = A_2 A_1 A_0$ has spectral radius $\rho(\Phi) \approx 0.205$, so the open-loop plant is stable.

\subsection{Periodic Relative Degree $r = 1$}\label{sec:ex_rd1}

With the output matrices
\begin{equation}\label{eq:ex_rd1_C}
C_0 = \begin{bmatrix} -0.5 & 1.3 \end{bmatrix},\;
C_1 = \begin{bmatrix} 0.1 & 0.8 \end{bmatrix},\;
C_2 = \begin{bmatrix} 0.4 & 0.8 \end{bmatrix},
\end{equation}
Assumption~\ref{ass:rd} is satisfied with $r = 1$: the periodic Markov parameters are
$M_0^{(1)} = C_1 B_0 = 0.8$, $M_1^{(1)} = C_2 B_1 = 1.0$,
$M_2^{(1)} = C_0 B_2 = 1.05$,
all nonsingular.

Applying Theorem~\ref{thm:main} with $r = 1$ (where $\Phi(k\!+\!1,k) = A_k$), the inverse system matrices are:
\begin{align}
\Gamma_0 &= A_0 - B_0 [M_0^{(1)}]^{-1} C_1 A_0
= \begin{bmatrix} 0 & 1 \\ 0 & -0.125 \end{bmatrix}, \nonumber\\
\Gamma_1 &= A_1 - B_1 [M_1^{(1)}]^{-1} C_2 A_1
= \begin{bmatrix} 0.4 & 0.16 \\ -0.2 & -0.08 \end{bmatrix}, \nonumber\\
\Gamma_2 &= A_2 - B_2 [M_2^{(1)}]^{-1} C_0 A_2
\approx \begin{bmatrix} 0.433 & 0.186 \\ 0.167 & 0.071 \end{bmatrix},
\end{align}
with $\Pi_k = [M_k^{(1)}]^{-1}$, $\Lambda_k = B_k \Pi_k$, and $\Omega_k = -\Pi_k C_{k+1} A_k$ for each phase.

\textbf{Stability:}
The monodromy matrix
$\Phi_{\rm inv} = \Gamma_2\Gamma_1\Gamma_0$
has eigenvalues $\lambda_1 = 0$ and $\lambda_2 \approx 0.050$, giving $\rho(\Phi_{\rm inv}) \approx 0.050$.
The inverse is \emph{stable}: the eigenvalues of $\check{A}_{\rm inv}$ are approximately $\{0, 0, 0, 0.368, -0.184 \pm 0.319j\}$; the three nonzero ones are the invariant zeros of $\check{P}$ (all with modulus less than $1$), and the three zero eigenvalues are the additional inverse poles arising from the relative degree (Lemma~\ref{lem:lti_inv_relr}).

\textbf{Implementation:}
The matrices $\Gamma_k$, $\Lambda_k$, $\Omega_k$, $\Pi_k$ are given in closed form by (\ref{eq:Gammak_r})--(\ref{eq:Pik_r}); they are evaluated once from the plant matrices and stored for one period. Each step of $P^{-1}_1$ then requires one $n \times n$ matrix--vector product plus lower-order terms (here $n = 2$, $N = 3$), and the filter runs online indefinitely at the original sampling rate with $1$-step output preview.
For comparison, the Floquet-based stable inversion \cite{vanzundert2019ijc} applied to the same system requires the complex-valued matrix root $\Phi^{1/N}$ of the monodromy matrix (denoted $\Psi^{1/N}$ in \cite{vanzundert2019ijc}) and noncausal processing on the doubly-infinite time axis.

\textit{Simulation (asymptotic regime):}
The plant is driven by $u_{\rm ref}(k) = \sin(0.15\pi k)$ from $x(0) = [1,\; {-1}]^\top$. The $1$-step-delayed inverse is initialized with $\zeta(0) = [0,\; 0]^\top$ (mismatched initial condition). Fig.~\ref{fig:inv_sim_rd1} shows the result: despite the initial mismatch, $\hat{u}(k)$ converges to $u_{\rm ref}(k)$ within a few steps, reflecting the small spectral radius $\rho(\Phi_{\rm inv}) \approx 0.050$. The error decays by a factor of approximately $\rho(\Phi_{\rm inv})$ per period.

\begin{figure}[t]
\centering
\includegraphics[width=0.8\linewidth]{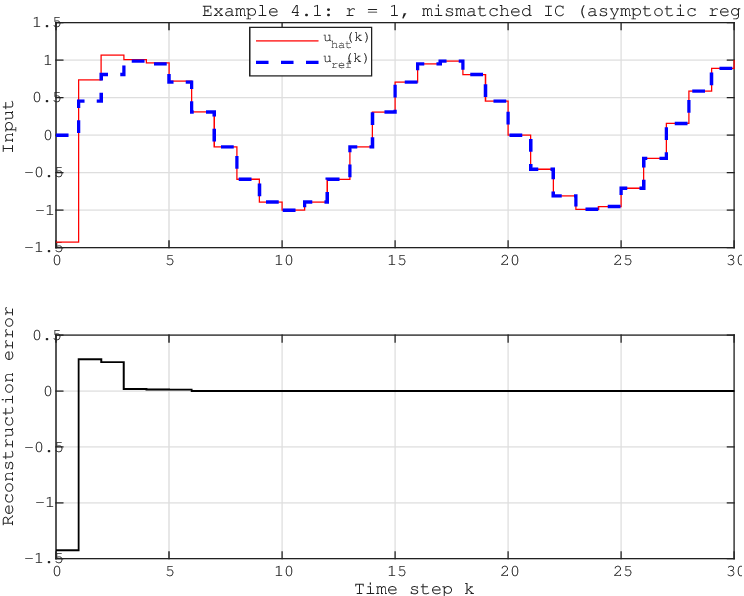}
\caption{Input reconstruction for periodic relative degree $r = 1$ (Section~\ref{sec:ex_rd1}, $N = 3$, $n = 2$, mismatched IC). Top: reference input $u_{\rm ref}(k)$ (dashed blue) and reconstructed input $\hat{u}(k)$ (solid red). Bottom: reconstruction error $\hat{u}(k) - u_{\rm ref}(k)$. The error decays geometrically due to $\rho(\Phi_{\rm inv}) \approx 0.050$.}
\label{fig:inv_sim_rd1}
\end{figure}

\subsection{Periodic Relative Degree $r = 2$}\label{sec:ex_rd2}

Using the same $A_k$, $B_k$ as in (\ref{eq:ex_common}), the output matrices are now chosen so that $C_{k+1}B_k = 0$ for all $k$, yielding $r = 2$:
\begin{equation}\label{eq:ex_rd2_C}
C_0 = \begin{bmatrix} 2 & -1 \end{bmatrix},\;
C_1 = \begin{bmatrix} 1.5 & 0 \end{bmatrix},\;
C_2 = \begin{bmatrix} 1 & -0.5 \end{bmatrix}.
\end{equation}
The constraints $C_1 B_0 = 0$, $C_2 B_1 = 0$, $C_0 B_2 = 0$ are easily verified, confirming $M_k^{(1)} = 0$ for all~$k$.
The second-order periodic Markov parameters are
$M_0^{(2)} = C_2 A_1 B_0 = 0.2$, $M_1^{(2)} = C_0 A_2 B_1 = 0.65$,
$M_2^{(2)} = C_1 A_0 B_2 = 1.5$,
all nonsingular, so Assumption~\ref{ass:rd} holds with $r = 2$.

Applying Theorem~\ref{thm:main} with $r = 2$ (where $\Phi(k\!+\!2,k) = A_{k+1}A_k$), the inverse system matrices are:
\begin{align}
\Gamma_0 &= A_0 - B_0 [M_0^{(2)}]^{-1} C_2 A_1 A_0
= \begin{bmatrix} 0 & 1 \\ 0 & -2.5 \end{bmatrix}, \nonumber\\
\Gamma_1 &= A_1 - B_1 [M_1^{(2)}]^{-1} C_0 A_2 A_1
\approx \begin{bmatrix} 0.231 & 0.092 \\ -0.538 & -0.215 \end{bmatrix}, \nonumber\\
\Gamma_2 &= A_2 - B_2 [M_2^{(2)}]^{-1} C_1 A_0 A_2
\approx \begin{bmatrix} 0.350 & 0.150 \\ 0 & 0 \end{bmatrix},
\end{align}
with $\Pi_k = [M_k^{(2)}]^{-1}$, $\Lambda_k = B_k \Pi_k$, and $\Omega_k = -\Pi_k C_{k+2} \Phi(k\!+\!2,k)$.

\textbf{Stability:}
The monodromy $\Phi_{\rm inv} = \Gamma_2\Gamma_1\Gamma_0$ is nilpotent: $\rho(\Phi_{\rm inv}) = 0$. All eigenvalues of $\check{A}_{\rm inv}$ are zero. This is structural: for $n = 2$, $N = 3$, $m = 1$, $r = 2$, the number of finite invariant zeros of $\check{P}$ is $Nn - rNm = 6 - 6 = 0$ (since Lemma~\ref{lem:lti_inv_relr} places $rNm$ eigenvalues of $\check{A}_{\rm inv}$ at the origin and the remaining $Nn - rNm$ are the invariant zeros), so the inverse is \emph{deadbeat}, with the reconstruction error vanishing within at most $n$ periods. In this example, direct computation gives $\Gamma_1\Gamma_0 = 0$ and hence $\Phi_{\rm inv} = 0$ exactly, so that with mismatched initial conditions the reconstruction error vanishes after at most one period ($N = 3$ steps).

\textit{Simulation (exact regime):}
The plant is driven by $u_{\rm ref}(k) = \sin(0.2\pi k)$ from $x(0) = [1,\; {-1}]^\top$. The $2$-step-delayed inverse (requiring $y(k+2)$ at each step) is initialized with $\zeta(0) = x(0)$ (matched initial condition). Fig.~\ref{fig:inv_sim_rd2} shows exact reconstruction from $k = 0$, with the error at machine precision, confirming the exact regime of Theorem~\ref{thm:main} for $r = 2$.

\begin{figure}[t]
\centering
\includegraphics[width=0.8\linewidth]{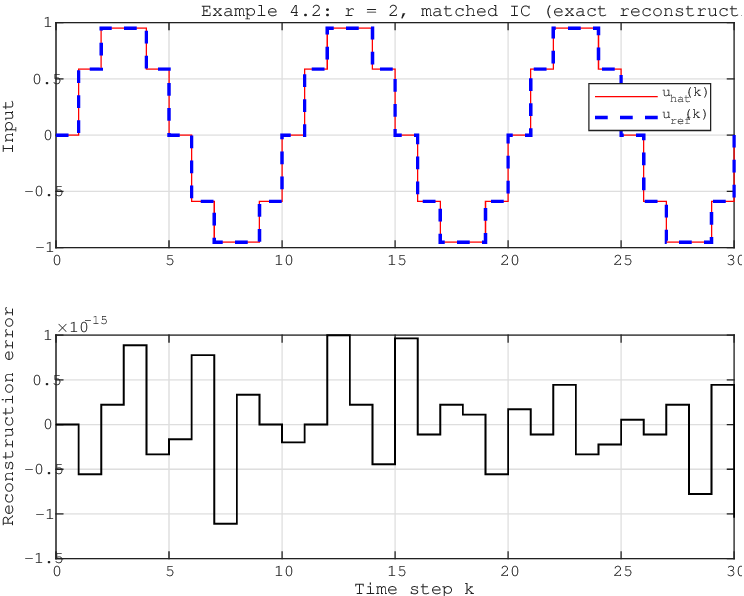}
\caption{Input reconstruction for periodic relative degree $r = 2$ (Section~\ref{sec:ex_rd2}, $N = 3$, $n = 2$, matched IC). Top: reference input $u_{\rm ref}(k)$ (dashed blue) and reconstructed input $\hat{u}(k)$ (solid red). Bottom: reconstruction error $\hat{u}(k) - u_{\rm ref}(k)$. The reconstruction is exact (error at machine precision) since $\zeta(0) = x(0)$.}
\label{fig:inv_sim_rd2}
\end{figure}

\section{Discussion}\label{sec:discussion}

At the representation level, both the lifted and the cycled representations are exact LTI equivalents of the LPTV plant, and both admit exact LTI inverses \cite{zhu2020ajc,zhu2025tac}; the essential difference is where the per-step causal structure resides. In the lifted representation, the inverse acts on the $N$-step block time scale, and within each block the per-step causal structure is encoded only implicitly (e.g., in the block-triangular structure of the lifted feedthrough), so that recovering an original-rate periodic state-space realization requires additional, nontrivial causality bookkeeping; moreover, for $D_k = 0$ the lifted feedthrough is singular, so block-shifted or delayed treatments become necessary. The cycled representation instead preserves the original time scale, and Theorem~\ref{thm:main} shows that its inverse---after shift absorption---remains in cycled form, so that an original-rate periodic state-space inverse is obtained as a literal block read-off for every uniform $r \geq 0$. This structure-preservation property of the parameter extraction step is precisely what yields an online causal (or $r$-step-preview) periodic filter.

Turning from representation to implementation, the proposed cyclic inverse, the lifting-based inverse \cite{zhu2020ajc,zhu2025tac}, and the Floquet-based inverse \cite{vanzundert2019ijc} differ as follows. The proposed cyclic inverse is an $N$-periodic state-space filter at the original sampling rate: it uses real arithmetic, is causal for $r = 0$ ($r$-step preview otherwise), runs online, and reconstructs the input exactly when $\zeta(0) = x(0)$, with geometric error decay at rate $\rho(\Phi_{\rm inv})$ per period otherwise; its scope is restricted to periodically minimum phase plants. The lifting-based inverse \cite{zhu2020ajc,zhu2025tac} is a finite-horizon batch feedforward computed offline on the $N$-step block time scale: it uses real arithmetic, is exact after a preparation window, and extends to nonminimum phase plants via Smith--McMillan decomposition and pseudoinversion, yielding $L_2$-bounded exact finite-time feedforward \cite{zhu2025tac}. The Floquet-based inverse \cite{vanzundert2019ijc} is a bounded noncausal operator at the original sampling rate, constructed via exponential dichotomy: it is exact in the infinite-horizon sense and likewise covers nonminimum phase plants, but requires future outputs and the complex matrix root $\Phi^{1/N}$ of the monodromy matrix, so that its factors are complex-valued even for real plants and no causal state-space realization is produced. The wider scope of the lifting-based and Floquet-based routes thus reflects a different problem setting---offline finite-horizon or noncausal infinite-horizon feedforward versus online causal filtering---rather than a deficiency of either framework; the per-step cost of the proposed filter and its qualitative contrast with the Floquet route on the same plant are given in Section~\ref{sec:ex_rd1}.

Turning to the algebraic inversion literature, Perdon, Conte, and Longhi \cite{perdon1992} gave the first invertibility and inversion results for periodic systems by adapting Silverman's structure algorithm to the periodic setting with $D_k = 0$, yielding existence conditions and a synthesis procedure for general (including non-square and non-uniform) periodic systems. Their inverse is expressed through the output matrices of a recursive Periodic Structure Algorithm and applies a time-varying bank of forward shift operators to future outputs; it is not given in closed form as explicit functions of the plant matrices $A_k, B_k, C_k$. They also noted (Section~5 therein) that constructing a periodic inverse via lifting remained an open problem, because no generally applicable procedure for inverting the lifting operator was then available. The present results, restricted to the uniform relative degree case with nonsingular periodic Markov parameters, complement that work by providing explicit closed-form $N$-periodic state-space matrices of the inverse at the original sampling rate---directly as functions of the plant data---together with the zero-based stability characterization of Theorem~\ref{thm:stability}.

On the computational side, Varga \cite{varga2004} addressed generalized inverses of periodic systems via lifted pencil methods; the explicit formulas appear only as a two-line starting point for the square case with invertible $D_k$ (equation~(12) therein), while the delay-inverse construction for $D_k = 0$ is listed as an open computational problem (Section~VI-A therein). The present paper resolves the corresponding problem in the cyclic setting through the structure-preservation property of the parameter extraction step. Varga \cite{varga2013,varga2001} surveys the broader computational framework for periodic systems; generalized inverses of periodic matrix pairs have also been used for model reduction of periodic descriptor systems \cite{hossain2014}, an objective different from feedforward inversion. Verriest's sequence-algebra framework \cite{verriest2013} offers an alternative algebraic setting for periodic realizations.

Two further connections are worth noting. Theoretically, the inversion problem can be viewed as a periodic model matching problem with the identity target \cite{colaneri1997tac,tange2004acc}; the common element is the role of the periodic invariant zeros, which constrain the achievable target models there and determine the stability of the inverse here. Practically, the inverse $P^{-1}_r$ can be used directly as a feedforward controller or as the inversion model in ILC \cite{bristow2006csm,vanzundert2018mech}: with matched initial conditions the input is reconstructed exactly, and otherwise the error decays geometrically, governed by $\Phi_{\rm inv}$.

When the cycled plant has invariant zeros on or outside the unit circle, the inverse of Theorem~\ref{thm:main} fails to be stable and no stable causal realization is feasible. A natural extension, in the absence of zeros on the unit circle, is to apply the stable/unstable dichotomy \cite{devasia1998tac} to the cycled LTI plant and recover the periodic matrices of the resulting bounded noncausal inverse through parameter extraction; this is left for future work.

\section{Conclusion}\label{sec:conclusion}

This paper has derived an explicit closed-form $N$-periodic state-space realization of the inverse of a discrete-time LPTV system with uniform periodic relative degree $r \geq 0$, by combining the classical cyclic reformulation and LTI inversion with a parameter extraction step whose substance is the preservation of the cyclic structure under inversion (after shift absorption for $r \geq 1$). The resulting inverse uses only real-valued matrices, operates at the original sampling rate, is causal for $r = 0$ and requires $r$-step output preview for $r \geq 1$, and reconstructs the input exactly under matched initial conditions, with geometric error decay otherwise; its stability is equivalent to all invariant zeros of the cycled plant lying in the open unit disk. These properties make the inverse directly applicable to feedforward control and iterative learning control of multirate and periodic systems.

Beyond the noncausal extension to nonminimum phase plants outlined in Section~\ref{sec:discussion}, future work includes systems with non-uniform relative degree across time phases (as arise in multirate I/O patterns), inversion under the weaker rank conditions of the structure algorithm \cite{perdon1992}, and integration with feedback controller design for periodic systems \cite{bittanti2009}.

\section*{Declaration of Generative AI and AI-assisted technologies in the manuscript preparation process}
During the preparation of this work the author used Claude (Anthropic) in order to refine English expressions and improve readability throughout the manuscript. After using this tool, the author reviewed and edited the content as needed and takes full responsibility for the content of the publication.

\section*{Funding}

This research did not receive any specific grant from funding agencies in the public, commercial, or not-for-profit sectors.


\end{document}